\newenvironment{subproof}[1][]{\begin{proof}[#1]}{\end{proof}}
\newcommand{\removelatexerror}{\let\@latex@error\@gobble}
\newtheorem{theorem}    {Theorem}
\newtheorem{prop} {Proposition}
\newtheorem{claim}    {Claim}
\newtheorem{remark}     {Remark}
\theoremstyle{definition}
\newtheorem{assm} {Assumption}
\newcommand{\nc}{\newcommand}
\nc{\R}{{\mathbb R}}
\nc{\C}{{\mathbb C}}
\nc{\Z}{{\mathbb Z}}
\nc{\N}{{\mathbb N}}
\nc{\s}{\bar{\bf s}}
\nc{\I}{{\cal I}^*}
\nc{\e}{\boldsymbol{\mathsf e}}
\nc{\ellb}{\boldsymbol\ell}
\nc{\be}{\begin{equation}}
\nc{\ee}{\end{equation}}
\nc{\uk}{\underline{k}}
\nc{\bk}{\bar{k}}
\pretocmd\@bibitem{\color{black}\csname keycolor#1\endcsname}{}{\fail}
\newcommand\citecolor[1]{\@namedef{keycolor#1}{\color{black}}}
\title{\LARGE \bf 
Adaptive Control with Set-Point Tracking and Linear-like Closed-loop Behavior
}
\author{Mohamad T. Shahab
\thanks{The author
        is with 
        the Department of Electrical, Computer, and Biomedical Engineering,
        Toronto Metropolitan University, 
        Toronto, ON M5B 2K3, Canada.
        Email: {\tt mshahab@torontomu.ca}.
        }
\thanks{
  This work was supported by the faculty
  start-up fund provided by the 
  Toronto Metropolitan University.
}
      }
\begin{document}

\baselineskip=12.25pt
\maketitle
\thispagestyle{empty}
\pagestyle{plain}

\begin{abstract}
In this paper, we consider the problem of set-point tracking
for a discrete-time plant with unknown
plant parameters belonging to a convex and compact uncertainty
set. 
We carry out parameter estimation
for an associated auxiliary plant, and a pole-placement-based
control law is employed.
We prove that this adaptive
controller provides 
desirable linear-like closed-loop behavior
which guarantees a bound 
consisting of:
exponential decay with respect to the initial condition, 
a linear-like convolution bound with respect to the exogenous inputs, and
a constant 
scaled by the square root of the constant in the denominator 
of the parameter estimator update law.
This implies that the system has a bounded gain.
Moreover, asymptotic tracking is 
also proven when the disturbance is constant.

\end{abstract}

\section{Introduction}

Adaptive control is an approach used to deal with systems 
with uncertain and/or time-varying 
parameters. Initial general proofs came around 1980, e.g. see \cite{morse1978}, \cite{Goodwin1980}, \cite{Morse1980}, \cite{Narendra1980} and \cite{Narendra1980_pt2}.
These original adaptive controllers
are typically not robust to unmodelled dynamics, do not tolerate time-variations very well,
may have poor transient behavior and do not handle noise/disturbances well (e.g.
see \cite{rohrs}). 
Afterwards, some controller design changes were proposed
to deal with the above shortcomings, such as the use of signal normalization, 
deadzones and $\sigma$-modification, e.g.
see \cite{Ioa86}, \cite{kreiss}, \cite{rick2}, \cite{rick}, 
and \cite{Tsakalis4};
also, simply using projection onto a convex set
of admissible parameters turned out to be powerful, 
e.g. see \cite{hanfu}, \cite{Naik}, \cite{Wen}, \cite{Wenhill} and \cite{ydstie}. 
However, in general, 
these redesigned
controllers 
may provide asymptotic 
closed-loop
behavior
but no exponential stability nor bounded
gain on the noise are proven\footnote{In 
Ydstie's \cite{ydstie}, an exception, a bounded gain is proven.}.
Although, some controllers, especially those using projection,
provide a bounded-noise bounded-state property, 
as well as tolerance of some degree
of unmodelled dynamics and/or time-variations.

Recently, 
for
discrete-time
LTI plants,
in the $d$-step ahead control and the model reference adaptive control (MRAC) setting \cite{scl17}, \cite{acc19}, \cite{mcss20}, \cite{cdc21}, 
and in the pole-placement control setting \cite{ccta17}, \cite{mcss18}, \cite{cdc19}, \cite{TAC22},  
subject to some standard assumptions,
a new approach has been proposed
which not only provides exponential
stability and a bounded gain on the noise, but also a
linear-like
convolution bound on the exogenous inputs;
this convolution bound is leveraged to prove 
tolerance 
to a degree of time-variations and to a degree of unmodelled
dynamics \cite{acs24}. 
The key idea is to use the
original (ideal) projection algorithm in conjunction with a restriction
of the parameter estimates to a convex set,
although
this convexity requirement was relaxed in
\cite{cdc18}, \cite{mcss18}, \cite{cdc19} and \cite{TAC22}.
However, in the literature it is very common  
to adopt a modified
version of the ideal projection algorithm in which a constant is added
to the denominator of the estimator update law. 
This is widely used, e.g. in \cite{kreiss}, \cite{kreiss2}, \cite{hanfu}, 
\cite{rick}, \cite{GOODWIN1987}, \cite{rick2},
\cite{ydstie}, and \cite{hyst92}, and in
the disturbance-free case, stability
and tracking is proven under minimal assumptions
 \cite{Goodwin1980}.
In the presence of disturbances, however, analysis is harder;
for example, in \cite{Wen} 
it is proven that a bounded disturbance yields a bounded state.
A linear-like result was shown in our paper \cite{tac25} utilizing the modified algorithm, but it only considered the MRAC setting for minimum phase plants.

In this paper, our objective is to provide set-point (step) tracking and
to obtain
a quantitative bound in terms of the initial condition, the exogenous inputs,
and the key estimator constant parameter. 
This is done
by
extending the ideas of the recent work on
the ideal projection algorithm to the aforementioned more widely-used modified parameter estimator.
To this end, we impose standard classical assumptions, and
since our objective is
to obtain uniform bounds, 
we impose a natural compactness assumption.
Here we prove a bound on the closed-loop behavior consisting of
three terms:
a decaying exponential on the initial condition,
a linear-like convolution bound on the noise input, and
a constant 
scaled by the square root of the constant in the denominator
of the estimator update law and the set-point.
This closed-loop behavior is clearly stronger than showing a ``bounded-noise
bounded-state'' performance. 
We would like also to mention that there is no persistent
excitation requirement of any sort in the closed-loop system.
Furthermore, for a constant disturbance entering the system, 
we prove asymptotic tracking (and disturbance rejection) as well.
The controllers in our previous work \cite{cdc19}, \cite{TAC22} 
provide set-point tracking and linear-like bounds, however 
the modified estimator was not used and they
only consider a deadbeat control design which we relax and generalize here.

{\bf Notation.} 
We denote $\R$, $\Z$, $\Z^+ $, $\N $ and $\C$ as the set of real numbers, integers, non-negative integers, natural numbers and complex numbers, respectively.
 We will denote the Euclidean-norm of a vector and the induced norm of a matrix by 
the subscript-less default notation $\|\cdot\|$. 
Also, $\ellb_{\infty} $ denotes the set of bounded sequences.
For a signal $f\in\ellb_\infty $, define the $\infty $-norm by
$\|f \|_\infty:= \sup_{t\in \Z} | f(t) | $.
For a closed and convex set $\Omega\subset\R^p$, let the function
 $\mathrm{Proj}_{\Omega}
 \left\{ \cdot \right\}:\R^{p}
 \rightarrow \Omega $ 
denote the projection 
onto the set $\Omega  $
in the 2-norm;
because the set $\Omega $ is closed and convex, 
the function $\mathrm{Proj}_{\Omega} $ is well-defined.
Let $\mathbf 0_{p\times q} $ denote the $p \times q$ matrix
whose entries are all zeros, and 
$I_{p} $ denote the
identity matrix of size $p$.
Define the normal vector $\e_j \in\R^{p}$ 
of appropriate length $p$ by
  $$\e_j:=
      \bigl[
    \underbrace{\begin{matrix}0 & \cdots & 0 \end{matrix}}_{j-1\text{ elements} } \;\; \begin{matrix}1 & 0 & \cdots & 0 \end{matrix}
      \bigr]^\top.$$
For a signal
$f$ which is 
sufficiently well-behaved to have a $z$-transform,
we let $F(z) $ denote this quantity.

\section{The Setup}

We consider the ${n{{\text{th}}}}$-order linear time-invariant discrete-time plant
\begin{flalign}
y(t+1)
&=
\sum_{j=1}^n
a_{j} y(t-j+1) 
+ 
\sum_{j=1}^n b_{j}u(t-j+1)
+w(t), 
\nonumber \\
&\qquad t\in\Z,
\label{plant1}
\end{flalign}
with $y(t),u(t),w(t)\in\R $ denoting the measured output,
 the control input, and the disturbance/noise input, respectively.
We can represent the plant model by the vector of parameters
$$
\theta=
\begin{bmatrix}
a_1 & a_2 & \cdots & a_{n} & b_1 & b_2 & \cdots & b_{n}
\end{bmatrix}^\top
\in 
{\cal S} \subset
\R^{2n}.$$
We assume that $\theta$ is unknown but the set ${\cal S}\subset \R^{2n}$ is known.
For every $\theta\in{\cal S}$, associated with the plant \eqref{plant1}
are the polynomials
\[
  \mathbf A_\theta(z^{-1}):=1-a_1 z^{-1}-a_2 z^{-2}\cdots-a_n z^{-n},
\]
and
\[
 \mathbf B_\theta(z^{-1}):=b_1 z^{-1}+b_2 z^{-2}\cdots+b_n z^{-n},
\]
and the strictly proper 
transfer function 
$\frac{\mathbf B_\theta(z^{-1})}{\mathbf A_\theta(z^{-1})}$.
Such a plant can be expressed in the (two-sided) $z$-transform form as
\begin{equation}
\label{plantTF1}
\mathbf A_\theta(z^{-1})Y(z)=\mathbf B_\theta(z^{-1})U(z)+z^{-1}W(z).
\end{equation}
We impose an assumption on the set of admissible plant parameters.
\begin{assm}
\label{assume1}
${\cal S} $ is convex and compact, and for each $\theta \in {\cal S}$,
the polynomials 
$z^n\mathbf A_{\theta}(z^{-1} ) $ and $z^n\mathbf B_{\theta}(z^{-1} )$ are coprime.
\end{assm}
\noindent
The convexity part of the above assumption is
common in the adaptive control literature---it is used to
facilitate parameter projection, e.g.  see \cite{goodwinsin}.
The boundedness part 
is quite
reasonable in practical situations; it is used here to ensure that
we can prove uniform bounds and decay rates on the closed-loop behavior.

Our objectives here is to prove an exponential form of
stability, 
a convolution bound as well as a bounded gain on the noise,
and
asymptotic tracking of a desired set-point 
(and constant disturbance rejection).
To achieve the tracking objective, we require the following assumption:
\begin{assm} \label{assm2}
For every $\theta\in{\cal S}$, the corresponding polynomial $\mathbf B_\theta(z^{-1} )$ is such that $\mathbf B_\theta(1 )\neq0$.
\end{assm}
\begin{remark}
Here we allow the plant to have zeros outside
the open unit desk. Hence, the plant could be unstable and non-minimum phase, which makes it challenging to control.
\end{remark}

In this paper we will do system identification on a related auxiliary model rather than the original plant model. 
Let $y^*(t)= r \in\R,\,t\in\Z $, be the desired constant 
reference signal; so in the $z$-domain, this class of reference signals is described by $(1-z^{-1})Y^*(z)=0$.
Let us define the tracking error by
\begin{equation}
\label{tracking}
\overline y(t):=y(t)-y^*(t),
\end{equation}
and an auxiliary control input by
\begin{equation}
\label{ubar_def}
\overline u(t):=u(t)-u(t-1)
\end{equation}
as well as an adjusted disturbance signal by
\begin{equation}
\label{wbar_def}
\overline w(t):=w(t)-w(t-1).
\end{equation}
Accordingly, if we then multiply both sides of the plant model \eqref{plantTF1} by $(1-z^{-1})$ and then subtract $(1-z^{-1})\mathbf A_\theta(z^{-1})Y^*(z) $ from both sides, 
we obtain 
\begin{equation}
\underbrace{(1-z^{-1})\mathbf A_\theta(z^{-1})}_{=:\overline{\mathbf A}_\theta(z^{-1})}\overline Y(z) 
= 
\mathbf B_\theta(z^{-1}) 
\overline U(z)
+
z^{-1} 
  \overline W(z)
  .
  \label{plantTF2}
\end{equation}
Notice that the polynomial $\overline {\mathbf A}_\theta(z^{-1})$ has the form
\begin{flalign*}
\overline {\mathbf A}_\theta(z^{-1})
&=
(1-z^{-1})\mathbf A_\theta(z^{-1}) 
 \\
&=
1- \underbrace{(1+a_1)}_{=:\overline a_1} z^{-1} - 
\underbrace{(a_2-a_1)}_{=:\overline a_2} z^{-2} - \cdots 
\\
&\qquad \cdots
- \underbrace{(a_n-a_{n-1})}_{=:\overline a_n} z^{-n} 
- \underbrace{(-a_n)}_{=:\overline a_{n+1}} z^{-(n+1)}.
\end{flalign*}
We see that the parameters of $\overline{\mathbf A}_\theta(z^{-1})$ are determined in a simple way from those of $\mathbf A_\theta(z^{-1})$. 
Indeed, we can form 
a invertible matrix ${\cal V}_n\in\R^{(2n+1)\times(2n+1)}$, 
\[
{\cal V}_n 
=
{
 \left[\begin{matrix}
1 & 1 &  &  &  & 
\\ 
 & -1  & 1 &  & &
\\
 &  & -1 & 1 &  
\\
 &  & & \ddots & \ddots & 
\\ 
 &  &  & & \ddots & 1
\\ 
 &  & &  &   & -1
\end{matrix}\right],
}
\]
depending solely on $n$,
so that
the parameters of $\overline{\mathbf A}_\theta(z^{-1}) $ and $\mathbf B_\theta(z^{-1}) $ 
of \eqref{plantTF2} are given by
\begin{equation*}
{\cal V}_n
{
\begin{bmatrix}
1 \\ a_1 \\ \vdots \\ a_{n} \\
b_1 \\ \vdots \\ b_{n}
\end{bmatrix}}
=
{
\begin{bmatrix}
\overline a_1 \\ \overline a_2 \\  \vdots \\ \overline a_{n+1} \\
b_1 \\ \vdots \\ b_{n}
\end{bmatrix}
}
=:\theta^*
.
\end{equation*}
So the set 
of admissible parameters of \eqref{plantTF2} is given by
\begin{equation}
\overline {\cal S}
 :=
{
\left\{ 
{\cal V}_n
\begin{bmatrix}
1 \\ \theta
\end{bmatrix}
: \; \theta
 \in {\cal S}
 \right\}
 \subset \R^{2n+1}
 }.
 \label{const_multi}
\end{equation}
Using this notation, the auxiliary plant \eqref{plantTF2} 
can now be put into the regressor form:
\begin{equation}
\label{plant2}
\overline y (t+1)=\psi(t)^\top \theta^*+\overline w(t),
\end{equation}
with 
$\theta^* \in \overline{\cal S} $
and
$\psi(t)\in\R^{2n+1}$ defined as
\begin{flalign*}
\psi(t)
&:=
\left[ 
\begin{matrix}
\overline y(t) & \overline y(t-1) & \cdots & \overline y(t-n)
\end{matrix}\right.
\\
&\qquad\qquad\quad
\left.
\begin{matrix} 
& \overline u(t) & \overline u(t-1)& \cdots & \overline u(t-n+1)
\end{matrix}
\right]
^\top.
\end{flalign*}

Since ${\cal S}$ is compact, it follows that $\overline{\cal S}$ is so as well. 
Furthermore, since ${\cal S}$ is convex and $\overline{\cal S}$ is a linear image of ${\cal S}$, then $\overline{\cal S}$ is convex as well \cite{convex_book}.
Additionally, because of Assumptions \ref{assume1} and \ref{assm2} we see that for every $\theta^*\in\overline {\cal S}$, the corresponding polynomials $z^{n+1}\overline {\mathbf A}_{\theta}(z^{-1} ) $ and $z^n\mathbf B_{\theta}(z^{-1} )$ are coprime and $\mathbf B_\theta(1)\neq0$.

In most adaptive controllers the goal is to prove asymptotic behavior, so details of initial conditions are not important. However, we want to get a bound on the transient behavior, in particular a bound including exponential decay with respect 
to the initial condition and a convolution sum bound with respect to the exogenous inputs. To this end, with a starting time of $t_0$, define the initial condition
\[
  \boldsymbol\phi_0:= 
  \begin{bmatrix}
  y(t_0) & \cdots & y(t_0-n) & u(t_0) & \cdots & u(t_0-n)
  \end{bmatrix}^\top.
\]
Clearly this provides $\psi(t_0)$ as well.

\section{The Adaptive Controller}

\subsection{Parameter estimation}

For the plant model \eqref{plant2},
starting with an initial estimate $\theta_0\in\overline{\cal S} $ at time $t_0 $,
given an estimate $\hat\theta(t) $ of $\theta^* $ at time $t\geq t_0 $,
a common way to obtain a new estimate is to solve the optimization problem
\begin{equation}
\hat\theta(t+1)=
\underset{X}{\mathrm{argmin}} \left\{
\|X-\hat\theta(t) \| : 
\overline y(t+1)=\psi(t)^\top X
\right\},
\nonumber
\end{equation}
which yields the {\bf ideal projection algorithm}:
\begin{equation}
\hat\theta(t+1)
 =
\begin{cases}
 \hat\theta(t)
 & \psi(t)=0 
 \\
 \hat\theta(t)
 +
  \displaystyle
 \frac{\psi(t)}{\|\psi(t)\|^2}
 \bigl[
  \overline y(t+1) - \psi(t)^\top \hat\theta(t)
  \bigr] 
  & \text{otherwise.}
  \end{cases}
  \label{orig1}
\end{equation}
We can also constrain the new estimates to $\overline{\cal S}$ by projection. Of
course, if $\|\psi(t)\|$ is close to zero, numerical problems
may occur, so it is the norm in the literature (e.g. \cite{goodwinsin} and \cite{Goodwin1980})
to add a constant to the denominator:\footnote{
In \cite{akhtar},
  the ideal algorithm \eqref{orig1} is used and
  Lyapunov stability is proven, 
  but a convolution bound on the exogenous inputs is not proven, and the high-frequency gain is assumed to be known exactly.
}
with $\mu>0$, consider the {\bf classical estimator}
(with associated projection onto $\overline{\cal S} $):
\begin{subequations}
\label{est2}
\begin{flalign}
\label{est2a}
 &\check\theta(t+1)
 =
 \hat\theta(t)
 +
  {\frac{\psi(t)}{\mu+\|\psi(t)\|^2}}
  \bigl[
  \overline y(t+1) - \psi(t)^\top \hat\theta(t)
  \bigr] ,
 \\
\label{est2b}
 &\hat\theta(t+1)
 =
\mathrm{Proj}_{\overline{\cal S}} \left\{ \check\theta(t+1) \right\}.
\end{flalign}
\end{subequations}
The ideal estimator \eqref{orig1} has been analyzed in great detail in our earlier work on adaptive control 
including 
the first-order one-step-ahead setup \cite{scl17},
the high-order $d$-step-ahead setup \cite{acc19}, \cite{mcss20},
the model reference setup \cite{cdc21},
the pole-placement stability problem \cite{mcss18}, 
and various extensions including multi-estimators
and switching \cite{cdc18}, \cite{cdc19}, \cite{TAC22}.
In all of these cases, quite surprisingly we are
able to prove, under suitable assumptions, 
that the closed-loop system exhibits linear-like
behavior.
Here we will focus on the more widely used classical estimator \eqref{est2} and prove that the corresponding closed-loop system exhibits 
linear-like behavior with an offset.

Define the prediction error by
\begin{equation} \label{predict1}
e(t+1):=\overline y(t+1) - \psi(t)^\top \hat\theta(t),
\end{equation}
and 
the parameter error 
by $\tilde\theta(t) := \hat\theta(t)-\theta^* $.
The following Proposition lists properties 
of the parameter estimator.
\begin{prop}[\hspace{-.2pt}\textbf{\cite{tac25}}]
\label{est_prop}
For every 
 $t_0\in\Z$, initial condition $ \boldsymbol \phi_0\in\R^{2(n+1)} $, $
\theta_0, \theta^*\in\overline{\cal S},
\overline w\in\ellb_\infty
 $,
and $\mu>0 $,
when the parameter estimator \eqref{est2} is applied to the plant \eqref{plant2}:
\\ 
(i) the following holds:
\begin{flalign}
&\|\tilde\theta(t) \|^2
\leq
\|\tilde\theta(\tau) \|^2
+
\sum_{j=\tau}^{t-1}
\biggl[
- \frac{1}{2}\frac{e(j+1) ^2 }{\mu+ \|\psi(j) \|^2 }
+
\nonumber
\frac{
2
\overline w(j)^2 }{\mu+ \|\psi(j) \|^2 }
\biggr]
,
\\
&\qquad\qquad\qquad
\qquad
t>\tau\geq t_0;
\nonumber
\end{flalign}
(ii)
on every interval of the form $[\underline t, \overline t ) \subset [t_0,\infty) $ which satisfies
$\|\psi(t) \|^2 \geq \mu, 
\,
t\in [\underline t, \overline t ),$
it follows that
\begin{flalign}
&\|\hat\theta(t+1)-\hat\theta(t) \|
\leq
\frac{|e(t+1) |}{\|\psi(t) \|},
\quad 
t
\in [\underline t, \overline t ),
\nonumber
\\
&\|\tilde\theta(t) \|^2
\leq
\|\tilde\theta(\tau) \|^2
+
\sum_{j=\tau}^{t-1}
\biggl[
- \frac{1}{4}\frac{e(j+1) ^2 }{ \|\psi(j) \|^2 }
+
\nonumber
\frac{2\overline w(j)^2 }{ \|\psi(j) \|^2 }
\biggr],
\\
&\qquad\qquad\qquad
\qquad
\overline t \geq t>\tau\geq \underline t.
\nonumber
\end{flalign}
\end{prop}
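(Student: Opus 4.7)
My plan is to derive a one-step Lyapunov-like recursion for $\|\tilde\theta(t)\|^2$, then telescope it, using projection non-expansiveness to handle the projection step \eqref{est2b}. The workhorse identity, obtained by subtracting $\psi(t)^\top\hat\theta(t)$ from both sides of the regressor plant \eqref{plant2}, is
\begin{equation*}
e(t+1) = -\psi(t)^\top \tilde\theta(t) + \overline w(t),
\end{equation*}
which lets me rewrite the update \eqref{est2a} as $\check\theta(t+1)-\theta^* = \tilde\theta(t) + \frac{\psi(t)}{\mu+\|\psi(t)\|^2}e(t+1)$, so that when I expand the squared norm, the cross term $\tilde\theta(t)^\top\psi(t)\,e(t+1)$ becomes $-e(t+1)^2+\overline w(t)\,e(t+1)$.

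Collecting the coefficient of $e(t+1)^2$ in $\|\check\theta(t+1)-\theta^*\|^2$ gives $-\frac{2\mu+\|\psi(t)\|^2}{(\mu+\|\psi(t)\|^2)^2}$, which is bounded above by $-\frac{1}{\mu+\|\psi(t)\|^2}$ since $\mu\geq0$. I would then apply Young's inequality in the form $2\overline w(t)\,e(t+1)\leq \tfrac{1}{2}e(t+1)^2 + 2\overline w(t)^2$, consuming half of the remaining negative $e(t+1)^2$ margin and leaving
\begin{equation*}
\|\check\theta(t+1)-\theta^*\|^2 \leq \|\tilde\theta(t)\|^2 - \tfrac{1}{2}\tfrac{e(t+1)^2}{\mu+\|\psi(t)\|^2} + \tfrac{2\overline w(t)^2}{\mu+\|\psi(t)\|^2}.
\end{equation*}
Since $\theta^*\in\overline{\cal S}$ and projection onto a closed convex set is non-expansive in the $2$-norm, $\|\tilde\theta(t+1)\|^2 = \|\mathrm{Proj}_{\overline{\cal S}}\{\check\theta(t+1)\}-\theta^*\|^2 \leq \|\check\theta(t+1)-\theta^*\|^2$; telescoping from $\tau$ to $t-1$ delivers part (i).

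For part (ii), the pointwise hypothesis $\|\psi(j)\|^2 \geq \mu$ on $[\underline t,\overline t)$ gives $\mu+\|\psi(j)\|^2 \leq 2\|\psi(j)\|^2$, which promotes $-\tfrac{1}{2}\tfrac{e(j+1)^2}{\mu+\|\psi(j)\|^2}$ to the tighter $-\tfrac{1}{4}\tfrac{e(j+1)^2}{\|\psi(j)\|^2}$, while $\tfrac{2\overline w(j)^2}{\mu+\|\psi(j)\|^2}\leq\tfrac{2\overline w(j)^2}{\|\psi(j)\|^2}$ trivially; telescoping then gives the second displayed estimate. For the incremental bound, I would compute $\|\check\theta(t+1)-\hat\theta(t)\| = \tfrac{|e(t+1)|\,\|\psi(t)\|}{\mu+\|\psi(t)\|^2}$, which under $\|\psi(t)\|^2\geq\mu$ is at most $\tfrac{|e(t+1)|}{\|\psi(t)\|}$; since $\hat\theta(t)\in\overline{\cal S}$ is its own projection, non-expansiveness bounds $\|\hat\theta(t+1)-\hat\theta(t)\|$ by this same quantity.

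I do not expect a genuinely hard step; the main place to be careful is keeping the bookkeeping on $\frac{2\mu+\|\psi(t)\|^2}{(\mu+\|\psi(t)\|^2)^2}$ honest and choosing the Young split so that sufficient negative $e(t+1)^2$ margin survives to yield the factor $\tfrac{1}{2}$. The other subtlety is recognizing that the sharper (ii) inequality requires $\|\psi(j)\|^2\geq\mu$ at \emph{every} $j$ in the summation range, not merely at the endpoints, which is precisely what the interval hypothesis supplies.
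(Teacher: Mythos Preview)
The paper does not actually prove Proposition~\ref{est_prop}; it is quoted verbatim from \cite{tac25} and used as a black box in the proofs of Proposition~\ref{lemma_main} and Theorem~\ref{theorem_main}. So there is nothing in this paper to compare your argument against.

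That said, your proposal is correct and is precisely the standard derivation one expects for this kind of projection-algorithm estimate. The algebra you outline checks out: expanding $\|\check\theta(t+1)-\theta^*\|^2$ and substituting $\psi(t)^\top\tilde\theta(t)=-e(t+1)+\overline w(t)$ does give the $e(t+1)^2$-coefficient $-\tfrac{2\mu+\|\psi(t)\|^2}{(\mu+\|\psi(t)\|^2)^2}\leq -\tfrac{1}{\mu+\|\psi(t)\|^2}$, the Young split you chose leaves exactly $-\tfrac12$, and projection non-expansiveness (using $\theta^*\in\overline{\cal S}$, respectively $\hat\theta(t)\in\overline{\cal S}$) closes both the Lyapunov recursion and the increment bound. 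One small remark: the inequality $\tfrac{\|\psi(t)\|}{\mu+\|\psi(t)\|^2}\leq\tfrac{1}{\|\psi(t)\|}$ holds whenever $\psi(t)\neq 0$, not only when $\|\psi(t)\|^2\geq\mu$; the interval hypothesis is really there to guarantee $\psi(t)\neq 0$ (so the right-hand side is defined) and, more importantly, to supply the $\mu+\|\psi(j)\|^2\leq 2\|\psi(j)\|^2$ replacement in the summed inequality.
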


\subsection{Control law}
\unskip

The elements of $\hat{\theta}(t)$ can be partitioned naturally as
$\hat{\theta}(t)=:[\hat{\overline a}_{1}(t) 
\;\; \cdots \;\; \hat{\overline a}_{n+1}(t) 
\;\; \hat b_{1}(t) \;\; \hat b_{2}(t)\;\; \cdots 
\;\; \hat b_{n}(t)]^\top;$
we can define polynomials associated with these estimates:
$$\widehat {\overline {\mathbf A}}_{\hat{\theta}(t)}(z^{-1}):=1-\hat{\overline a}_{1}(t)z^{-1}-
\hat {\overline a}_{2}(t)z^{-2}\cdots-\hat {\overline a}_{n+1}(t)z^{-(n+1)},
 $$
$$
\widehat {\mathbf B}_{\hat{\theta}(t)}(z^{-1}):=\hat b_{1}(t)z^{-1}+
\hat b_{2}(t)z^{-2}\cdots+\hat b_{n}(t)z^{-n}.$$
Now, 
let $\mathbf A^* ( z^{-1} )$ be a $(2 n+1)\text{th}$-order monic polynomial
of the form
\[
\mathbf A^* (z^{-1} ) = 1 + a_1^* z^{-1} + \cdots + a_{2n+1}^* z^{-(2n+1)}  
\]
chosen as desired 
such that $z^{2n+1} \mathbf A^* (z^{-1} )$ has all of its roots inside the unit desk.
Next we design a $(n+1){\text{th}}$-order strictly proper controller: 
at every $t\geq t_0 $
choose the following polynomials
$$\mathbf L_{\hat{\theta}(t)}(z^{-1})=1+l_{1}(t)z^{-1}+
l_{2}(t)z^{-2}+\cdots+l_{n}(t)z^{-n}, \qquad
 $$
$$ 
\mathbf P_{\hat{\theta}(t)}(z^{-1})=p_{1}(t)z^{-1}+
p_{2}(t)z^{-2}+\cdots+p_{n+1}(t)z^{-(n+1)}$$
so they satisfy the equation
\begin{equation}
\label{char1}
\widehat {\overline {\mathbf A}}_{\hat{\theta}(t)}(z^{-1})
\mathbf L_{\hat{\theta}(t)}(z^{-1})
+
\widehat {\mathbf B}_{\hat{\theta}(t)}(z^{-1})
\mathbf P_{\hat{\theta}(t)}(z^{-1})
=\mathbf A^* ( z^{-1} ).
\end{equation}
Since $z^{n+1}\widehat {\overline {\mathbf A}}_{\hat{\theta}(t)}(z^{-1})$ and $z^n\widehat {\mathbf B}_{\hat{\theta}(t)}(z^{-1})$ are coprime, 
we know that there exist {unique} $\mathbf L_{\hat{\theta}(t)}(z^{-1})$ and $\mathbf P_{\hat{\theta}(t)}(z^{-1})$ that satisfy \eqref{char1}---see \cite[Theorem 2.3.1]{Ioannou:1995:RAC:211527}; this entails
solving a linear equation.
It is also easy to prove that the coefficients of $\mathbf L_{\hat{\theta}(t)}(z^{-1})$ and $\mathbf P_{\hat{\theta}(t)}(z^{-1})$ are analytic functions of $\hat{\theta}(t) \in \overline{\cal S}$. We can now define the control law by
\begin{equation}
  \mathbf L_{\hat{\theta}(t-1)}(z^{-1}) \overline U(z)=-\mathbf P_{\hat{\theta}(t-1)}(z^{-1}) \overline Y(z).
  \label{controlTF1}
\end{equation}
This can be written in terms of the state vector. To proceed, define the control gains $K_{\hat{\theta}(t)} \in\R^{2n+1}$ by
\begin{equation}
\label{controlM_para3}
K_{\hat{\theta}(t)}:=
\bigl[
-p_{1}(t) \;\; \cdots \;\; -p_{n+1}(t) \;\; 
 -l_{1}(t) \;\; \cdots \;\; -l_{n}(t)
 \bigr],
\end{equation}
so from \eqref{controlTF1},
we see that 
\begin{equation}
 \overline u(t) = K_{\hat{\theta}(t-1)} \psi(t-1).
 \label{control_ax}
\end{equation}
From the definition of $\overline u(t) $ 
the plant control input
for $t> t_0 $
becomes
\begin{equation}
  u(t) = u(t-1) + \overline u(t).
  \label{control4}
\end{equation}
Next, we provide preliminary setup needed before analyzing the behavior of the closed-loop system.

\section{Preliminary Analysis}
\unskip

We can write down a state-space model of our closed-loop system with
$\psi (t) \in \R^{2n+1}$ as the state, including present and past values of the tracking error and the auxiliary control input. 
From \eqref{predict1}, we have 
\begin{equation}
  \overline y(t+1) =  \hat\theta(t)^\top \psi(t) + e(t+1);
  \label{y_eq2}
\end{equation}
we also have from \eqref{control_ax}
\begin{equation}
 \overline u(t+1) = K_{\hat{\theta}(t)}\psi(t).
 \label{u_eq2}
 \end{equation}
Now, define for each $\hat\theta(t)$ the matrix ${{\cal A}_{\hat\theta(t) }}\in\R^{(2 n +1)\times (2 n+1)}$
\begin{equation} \label{A_mat1}
{
  { {\cal A}_{\hat\theta(t)}}:=\begin{bmatrix}  
\hat\theta(t)^\top
 \\
\begin{bmatrix}
I_{n} & & {\mathbf 0}_{n\times(n+1)}
\end{bmatrix}
\\
 K_{\hat{\theta}(t)}
 \\
 \begin{bmatrix}
{\mathbf 0}_{(n-1)\times(n+1)}
&
I_{n-1} & {\mathbf 0}_{(n-1)\times 1}
\end{bmatrix}
\end{bmatrix};
}
\end{equation}
using the definition \eqref{A_mat1},
from \eqref{y_eq2} and \eqref{u_eq2}
we see that
the following equation holds:
\begin{equation}
\psi (t+1) = {\cal A}_{\hat{\theta}(t)} \psi (t) +
\e_1 e(t+1). 
\label{keyeq2}
\end{equation}
Notice that for every frozen time $t$ the characteristic equation of ${\cal A}_{\hat\theta(t)}$
satisfies
\begin{flalign*}
&
\det \bigl(
z I_{2n +1} - {\cal A}_{\hat\theta(t)}
\bigr)
\\
&
\quad=
z^{2 n+1}[
\widehat {\overline {\mathbf A}}_{\hat{\theta}(t)}(z^{-1})
\mathbf L_{\hat{\theta}(t)}(z^{-1})
+
\widehat {\mathbf B}_{\hat{\theta}(t)}(z^{-1})
\mathbf P_{\hat{\theta}(t)}(z^{-1})  
]
\\
&
\quad=
z^{2 n+1}\mathbf A^* ( z^{-1} ).
\end{flalign*}
Now define 
\begin{equation}
  \Xi(t)
:=
\e_1 \frac{ e(t+1)}{\|\psi(t)\|^2}\psi(t)^\top
\label{delta_2}
\end{equation}
where it is easy to see that
$
  \e_1 e(t+1) = \Xi(t)\psi(t) 
    ,$
so
we also obtain from \eqref{keyeq2}:
\begin{equation}
\psi (t+1) = [{\cal A}_{\hat{\theta}(t)} + \Xi(t) ] \psi (t). 
\label{keyeq}
\end{equation}
While we can view \eqref{keyeq} as a linear time-varying system, 
we have to keep in mind that ${\cal A}_{\hat{\theta}(t)}$ and $\Xi(t) $ are implicit nonlinear functions
of $\theta $, $\theta_0 $, $\boldsymbol\phi_0$, $r $ and $w $.
However, this linear time-varying interpretation is convenient for analysis.

Before presenting the main result of this paper,
in the form of a Proposition,
 we first analyze the adaptive control system to obtain
 a desired bound on $\psi(t) $.

Before proceeding, define
\[
\underline\lambda:=
\max
\bigl\{|\lambda|: \lambda\in\C \text{ and }  \lambda^{2n+1}{\mathbf A}^*(\lambda^{-1})=0  \bigr\}.
\]

\begin{prop}
\label{lemma_main} 
Suppose that the adaptive controller \eqref{est2}, 
\eqref{controlM_para3}, \eqref{control_ax} and \eqref{control4}
is applied to the plant \eqref{plant1}.
Then for every 
$\lambda \in (\underline\lambda,1)$, 
there exists a constant $c>0$ so that 
for every 
$t_0\in\Z$, 
$\theta \in {\cal S} $,
$\boldsymbol\phi_0 \in {\R}^{2(n+1)}$,
$\theta_0\in\overline{\cal S}$, 
$\mu>0 $, $r\in\R $
and
$w \in {\ellb}_{\infty}$,
and for every interval of the form $[\underline t,\overline t ]
 \subset [t_0,\infty ) $ 
which satisfies 
$\|\psi(t)\|^2 \geq \mu, 
\,
t\in[\underline t,\overline t ),$
the following holds
\begin{equation} 
\| \psi (t) \| 
\leq
 c  \lambda^{t-\underline t} \|\psi(\underline t) \|
+
\sum_{j=\underline t}^{t-1} 
c\lambda^{t-j-1} |\overline w(j)|, 
\quad
t \in [\underline t,\overline t].
\label{prop_psi}
\end{equation}
\end{prop}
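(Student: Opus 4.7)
My plan is to establish a uniform exponential bound for the closed-loop state $\psi(t)$ by (i) constructing a quadratic Lyapunov function adapted to the frozen closed-loop matrices ${\cal A}_{\hat\theta}$, (ii) deriving a one-step decrement inequality whose excess terms are controlled by the ratios $r_j := |e(j+1)|/\|\psi(j)\|$, and (iii) exploiting the summability estimate in Proposition~\ref{est_prop}(ii) through a discrete Bellman-Gronwall argument to convert this decrement into the convolution bound \eqref{prop_psi}.

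To begin, I would fix $\lambda\in(\underline\lambda,1)$ and choose $\lambda_0\in(\underline\lambda,\lambda)$. By the pole-placement identity \eqref{char1}, every frozen matrix ${\cal A}_{\hat\theta}$ has characteristic polynomial $z^{2n+1}\mathbf A^*(z^{-1})$ and hence spectral radius at most $\underline\lambda$; since $\overline{\cal S}$ is compact and $\hat\theta\mapsto {\cal A}_{\hat\theta}$ is analytic, the solution $P_{\hat\theta}\succ 0$ of ${\cal A}_{\hat\theta}^\top P_{\hat\theta}{\cal A}_{\hat\theta}-\lambda_0^2 P_{\hat\theta}=-I$ is Lipschitz in $\hat\theta$ and admits uniform two-sided bounds $p_1 I\preceq P_{\hat\theta}\preceq p_2 I$. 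I then study $V(t):=\psi(t)^\top P_{\hat\theta(t)}\psi(t)$: substituting \eqref{keyeq2}, splitting $P_{\hat\theta(t+1)}=P_{\hat\theta(t)}+\Delta P(t)$, and using the bound $\|\hat\theta(t+1)-\hat\theta(t)\|\le |e(t+1)|/\|\psi(t)\|$ from Proposition~\ref{est_prop}(ii), I obtain on the interval $[\underline t,\overline t)$ a one-step inequality of the shape
\[
V(t+1) \le \bigl(\lambda_0^2+\beta\,r_t+\beta\,r_t^2\bigr)V(t) + \gamma_1 |\overline w(t)|\sqrt{V(t)} + \gamma_2\,\overline w(t)^2,
\]
where the coefficients depend only on $\lambda$ and bounds derived from $\overline{\cal S}$. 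The splitting $e(t+1)=\overline w(t)-\tilde\theta(t)^\top\psi(t)$ is used to separate the disturbance contribution from the parameter-error contribution, with the latter absorbed into the multiplicative $r_t$-terms.

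To extract \eqref{prop_psi} from this decrement I would iterate, treating the inequality as a linear recursion with time-varying gain $\lambda_0^2+\beta r_t+\beta r_t^2$ and input $\gamma_1|\overline w(t)|\sqrt{V(t)}+\gamma_2\overline w(t)^2$. By Proposition~\ref{est_prop}(ii), $\sum_j r_j^2 \le 4\|\tilde\theta(\underline t)\|^2 + 8\mu^{-1}\sum_j \overline w(j)^2$, so over any window of length $N$ the quantity $\sum_j r_j$ is controlled by Cauchy-Schwarz. Choosing the window size so that $\lambda_0^{2N}$ leaves sufficient slack for the cumulative $r_j$-perturbation, one shows that the time-varying product $\prod_{j=\underline t}^{t-1}(\lambda_0^2+\beta r_j+\beta r_j^2)$ is bounded by a constant times $\lambda^{2(t-\underline t)}$, uniformly in the data; a discrete variation-of-constants formula then yields a convolution bound on $\sqrt{V(t)}$ in terms of $|\overline w(j)|$, which via the two-sided bounds on $P_{\hat\theta}$ converts to \eqref{prop_psi}.

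The principal obstacle is translating the inherently quadratic nature of the Lyapunov inequality into a linear-in-$\overline w$ convolution bound, while simultaneously absorbing both the time variation of $\hat\theta$ (through $\Delta P$) and the parameter-error cross term $\tilde\theta(t)^\top\psi(t)$ into a multiplicative correction to $\lambda_0^2$ that stays compatible with the target exponential envelope. The careful Young's-inequality splittings in the one-step inequality, together with the summability of $r_j^2$ from Proposition~\ref{est_prop}(ii), are the mechanisms that make this possible.
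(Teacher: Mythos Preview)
Your Lyapunov-function framework is a reasonable alternative to the paper's transition-matrix route (which invokes Kreisselmeier's lemma), but the argument as written has a genuine gap at the step where you assert that
\[
\prod_{j=\underline t}^{t-1}\bigl(\lambda_0^2+\beta r_j+\beta r_j^2\bigr)
\]
is bounded by a constant times $\lambda^{2(t-\underline t)}$ \emph{uniformly in the data}. The only control you invoke is Proposition~\ref{est_prop}(ii), which gives
\[
\sum_{j=\underline t}^{t-1} r_j^2 \;\le\; 4\|\tilde\theta(\underline t)\|^2 + 8\sum_{j=\underline t}^{t-1}\frac{\overline w(j)^2}{\|\psi(j)\|^2}
\;\le\; 4\overline{\mathbf s}^2 + \frac{8}{\mu}\sum_{j=\underline t}^{t-1}\overline w(j)^2 .
\]
The right-hand side depends on $\mu$ and on $\overline w$, so the resulting bound on the product (hence on the constant in front of $\lambda^{2(t-\underline t)}$) cannot be uniform in these quantities. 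The same obstruction persists if you replace $r_j$ by the parameter-error ratio $\tilde r_j=|\tilde\theta(j)^\top\psi(j)|/\|\psi(j)\|$: any bound on $\sum\tilde r_j^2$ obtained from Proposition~\ref{est_prop} still carries a $\sum \overline w(j)^2/\|\psi(j)\|^2$ term. Moreover, $r_j$ itself is not uniformly bounded (since $r_j\le \overline{\mathbf s}+|\overline w(j)|/\|\psi(j)\|$), so even a single factor of the product can be arbitrarily large.

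The missing mechanism is exactly what the paper supplies: a further partition of $[\underline t,\overline t)$ into ``good'' sub-intervals where $|\overline w(j)|^2/\|\psi(j)\|^2<\mathfrak v$ (with $\mathfrak v$ chosen small, depending only on $\lambda$ and $\overline{\cal S}$) and ``bad'' sub-intervals where the reverse inequality holds. On good intervals one gets $\sum r_j^2\le 4\overline{\mathbf s}^2+8\mathfrak v\,(t-\tau)$, whose linear growth rate is small and fixed, so the exponential envelope survives (via Kreisselmeier's lemma in the paper, or equivalently via your Lyapunov product). On bad intervals $\|\psi(j)\|\le \mathfrak v^{-1/2}|\overline w(j)|$ directly, which feeds straight into the convolution sum. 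Gluing these gives \eqref{prop_psi} with a constant depending only on $\lambda$ and $\overline{\cal S}$. Without some device of this kind, your step (iii) does not go through.
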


\begin{proof}
  See the Appendix.
The proof there utilizes Proposition \ref{est_prop} and a technical result of Kreisselmeier's \cite{kreiss2}.
\end{proof}

\section{The Main Result}

We now present the main result of this paper.
We will define a vector
$\phi(t)\in\R^{2(n+1) } $
\begin{flalign*}
  \phi(t)
  &:=
\left[ 
\begin{matrix}
y(t) & y(t-1) & \cdots & y(t- n)
\end{matrix}\right.
\\
&\qquad\qquad\quad
\left.
\begin{matrix} 
& u(t) & u(t-1) & \cdots & u(t-n)
\end{matrix}
\right]
^\top
\end{flalign*}
to serve as the overall system's state;
note here that the vector $\psi $ contains values of the tracking error and the auxiliary control input, 
while the vector $\phi $ contains values 
of the plant input and output.

\begin{theorem}
\label{theorem_main}
Suppose that the adaptive controller 
\eqref{est2}, 
\eqref{controlM_para3}, \eqref{control_ax} and \eqref{control4}
is applied to the plant \eqref{plant1}.
Then for every 
$\lambda \in (\underline\lambda,1)$, 
there exists a constant $\gamma>0$ so that 
for every 
$t_0\in\Z$, 
$\theta \in {\cal S} $,
$\boldsymbol\phi_0 \in {\R}^{2(n+1)}$,
$\theta_0\in\overline{\cal S}$, 
$\mu>0 $, $r\in\R $
and $w \in {\ellb}_{\infty}$,
\\
\indent
i) the following bound holds:
\begin{flalign}
\| \phi (t) \| 
&
\leq
 \gamma  \lambda^{t-t_0 } \|\boldsymbol\phi_0 \|
 + 
\gamma(|r|+\sqrt{\mu})
 +
 \nonumber
\\
&
\qquad\qquad
\gamma\sum_{j=t_0}^{t-1}
\lambda^{t-1-j} |w(j)| 
, 
\quad 
t \geq t_0;
 \label{thm_bound1}
\end{flalign}
\indent
ii) if $w$ is constant, then
$y(t)\rightarrow r  $ as $t \rightarrow \infty $.
\end{theorem}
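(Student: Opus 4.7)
The plan is to proceed in two main stages. For part~(i), I would first upgrade the conditional bound of Proposition~\ref{lemma_main} on $\|\psi(t)\|$ into a uniform bound for all $t\geq t_0$, and then translate it into the claimed bound on $\|\phi(t)\|$. For part~(ii), I would combine the square-summable dissipation inequality of Proposition~\ref{est_prop}(i) with the exponential stability of the pole-placed closed-loop recursion.

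The stitching step partitions $[t_0,\infty)$ into maximal intervals $[\underline t_i,\overline t_i)$ on which $\|\psi(t)\|^2\geq\mu$, with complementary intervals on which $\|\psi(t)\|^2<\mu$. On the latter, $\|\psi(t)\|<\sqrt\mu$ trivially; on the former, Proposition~\ref{lemma_main} supplies an exponential bound from $\|\psi(\underline t_i)\|$ plus a convolution in $|\bar w|$. To link one above-threshold interval to the next, the entry value $\|\psi(\underline t_i)\|$ (for $\underline t_i>t_0$) is bounded by combining the one-step recursion $\psi(\underline t_i)=\mathcal{A}_{\hat\theta(\underline t_i-1)}\psi(\underline t_i-1)+\mathbf{e}_1 e(\underline t_i)$ with $\|\psi(\underline t_i-1)\|<\sqrt\mu$, the identity $e(\underline t_i)=-\psi(\underline t_i-1)^\top\tilde\theta(\underline t_i-1)+\bar w(\underline t_i-1)$, and the uniform bounds on $\|\mathcal{A}_{\hat\theta}\|$ and $\|\tilde\theta\|$ implied by compactness of $\overline{\mathcal{S}}$; this yields $\|\psi(\underline t_i)\|\leq c_1\sqrt\mu+|\bar w(\underline t_i-1)|$. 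Combining across intervals produces
\[
\|\psi(t)\|\leq c\lambda^{t-t_0}\|\psi(t_0)\|+c'\sqrt\mu+c''\sum_{j=t_0}^{t-1}\lambda^{t-1-j}|\bar w(j)|.
\]

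To convert this to a $\phi$-bound, the $y$-entries reduce directly via $|y(t-k)|\leq|\bar y(t-k)|+|r|$, which injects the $\gamma|r|$ offset. The $u$-entries are the main obstacle: the naive telescoping $u(t)=u(t_0-1)+\sum_j\bar u(j)$ would let the $c'\sqrt\mu$ offset in the $\psi$-bound accumulate linearly in $t$ and so cannot deliver a uniform bound. My intended resolution is to exploit the pole-placement design---the frozen-time closed-loop $\phi$-dynamics inherit the Schur-stable characteristic polynomial $\mathbf{A}^*(z^{-1})$---and apply a Kreisselmeier-type slow-variation estimate analogous to the one used in the proof of Proposition~\ref{lemma_main}, thereby importing uniform exponential stability to the $\phi$-coordinates. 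Finally $|\bar w(j)|\leq|w(j)|+|w(j-1)|$ converts the $\bar w$-convolution into a $w$-convolution, and $\|\psi(t_0)\|\leq c(\|\boldsymbol\phi_0\|+|r|)$ merges into the $\|\boldsymbol\phi_0\|$ and $|r|$ terms, yielding \eqref{thm_bound1}.

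For part~(ii), constancy of $w$ gives $\bar w(t)=0$ for all $t>t_0$, so part~(i) already supplies a uniform bound on $\psi(t)$. Proposition~\ref{est_prop}(i) applied between $\tau=t_0$ and $t\to\infty$ then forces $\sum_{j\geq t_0}\frac{e(j+1)^2}{\mu+\|\psi(j)\|^2}<\infty$; since the denominator is uniformly bounded, $e(t)\to 0$. Returning to $\psi(t+1)=\mathcal{A}_{\hat\theta(t)}\psi(t)+\mathbf{e}_1 e(t+1)$, whose frozen-time characteristic polynomial is the Schur-stable $z^{2n+1}\mathbf{A}^*(z^{-1})$ uniformly in $\hat\theta(t)\in\overline{\mathcal{S}}$, the Kreisselmeier-type argument from Proposition~\ref{lemma_main} yields $\psi(t)\to 0$ as its input $e(t)$ vanishes. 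Since $\bar y(t+1)=\psi(t)^\top\theta^*$ when $\bar w=0$, this delivers $\bar y(t)\to 0$, i.e., $y(t)\to r$, completing the proof.
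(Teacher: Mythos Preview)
Your stitching argument for the $\psi$-bound in part~(i) matches the paper's, and your part~(ii) route---show $e(t)\to 0$ from Proposition~\ref{est_prop}(i), then conclude $\psi(t)\to 0$ from \eqref{keyeq2} via slow variation of $\mathcal{A}_{\hat\theta(t)}$---is a legitimate alternative to the paper's more algebraic argument through the identity $\mathbf L_{\hat\theta(t-1)}(z^{-1})E(z)=\mathbf A^*(z^{-1})\overline Y(z)$. (You do still need to verify the slow-variation hypothesis, i.e.\ that $\hat\theta(t+1)-\hat\theta(t)$ is square-summable; the paper proves this from the estimator update and the boundedness of $\psi$, and you would need the same step.)

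There is, however, a genuine gap in your conversion from the $\psi$-bound to the $\phi$-bound. Your claim that ``the frozen-time closed-loop $\phi$-dynamics inherit the Schur-stable characteristic polynomial $\mathbf{A}^*(z^{-1})$'' is false. In $\phi$-coordinates the output-update row carries the \emph{true} plant parameters $\theta$, while the input-update row uses the \emph{estimate} $\hat\theta(t)$ through $K_{\hat\theta(t)}$; the resulting frozen characteristic polynomial is $\overline{\mathbf A}_\theta(z^{-1})\mathbf L_{\hat\theta(t)}(z^{-1})+\mathbf B_\theta(z^{-1})\mathbf P_{\hat\theta(t)}(z^{-1})$, which equals $\mathbf A^*(z^{-1})$ only when $\hat\theta(t)=\theta^*$. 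With no persistent excitation assumed, $\hat\theta(t)$ need not converge to $\theta^*$, so the frozen $\phi$-matrix need not be Schur at any time, and a Kreisselmeier argument cannot start. The reason $\mathcal A_{\hat\theta(t)}$ in \eqref{keyeq2} \emph{does} have characteristic polynomial $z^{2n+1}\mathbf A^*(z^{-1})$ is precisely the prediction-error decomposition $\overline y(t+1)=\hat\theta(t)^\top\psi(t)+e(t+1)$, which substitutes the estimate for the true parameters in the top row; there is no analogue of this trick in $\phi$-coordinates.

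The paper obtains the $u$-bound by a different mechanism: it forms the augmented plant-plus-integrator state $\overline x(t)=\bigl[x(t)^\top\;\,u(t)\bigr]^\top$ as in \eqref{aug_sys1}, notes that Assumption~\ref{assm2} (no plant zero at $z=1$) makes $(\overline C_\theta,\overline A_\theta)$ observable, and chooses a deadbeat observer gain $H$ with $(\overline A_\theta+H\overline C_\theta)^{n+1}=0$. This expresses $\overline x(t)$, and hence $u(t)$, as a finite linear combination of the last $n+1$ values of $\overline y$, $\overline u$, $w$ and $r$, yielding $|u(t)|\leq c_1\|\psi(t-1)\|+|\overline u(t)|+c_1|r|+c_1\sum_{j=t-n-1}^{t-1}|w(j)|$. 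The already-established $\psi$-bound then feeds in directly, and the $\sqrt\mu$ offset does not accumulate. Your plan needs an observability-based reconstruction of this kind, not a stability argument on $\phi$.
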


\begin{proof}
 See the Appendix. 
The proof there
utilizes
Proposition \ref{lemma_main} to analyze
 the closed-loop behavior on time intervals 
 where $\psi(\cdot)$ is {\it large}, i.e. 
 when $\|\psi(t)\|^2 \geq \mu$, while it
 analyzes the closed-loop behavior where $\psi(\cdot)$ is {\it small} in a direct manner, before combining both cases. Linear system theory is then used to translate the bound on $\psi $ 
 to a bound on $\phi $. 
\end{proof}

\begin{remark}
The above result shows that the closed-loop 
system experiences linear-like behavior. 
There is a {\bf uniform} exponential decay bound on the effect of the initial condition, and a convolution bound on the effect of the noise.
It also provides a
clear bound in terms of the parameter $\mu$ and the set-point $r$---the smaller
that they are, the smaller that this bound is.
\end{remark}
\begin{remark}
The bound in \eqref{thm_bound1} implies that the system has a bounded gain
with bias (from $w$ to $y$):
\begin{flalign*}
  \Vert \phi(t) \Vert 
    &\leq
 \tfrac{\gamma }{1-\lambda} 
 \left(
   \lambda^{t-t_0 }
 \Vert \boldsymbol\phi_0 \Vert
   +
    \|w\|_{\infty}+|r| + \sqrt{\mu}
    \right),
  \; t\geq t_0.
\end{flalign*}
\end{remark}

\begin{remark}
We can show that the convolution bound proven in Theorem
\ref{theorem_main} will guarantee robustness to a degree of time-variations
and unmodelled dynamics. Furthermore,
we can also obtain bounds on the average tracking
error both in the case of no noise under slow time-variations,
as well as in the noisy case. We obtain these results by applying similar 
arguments to those in the proofs of Theorems 2, 3 and 4 of \cite{tac25} which considered the MRAC problem.
\end{remark}

\section{A Simulation Example}
We provide here a simulation example to illustrate the results of this paper. Consider the second-order plant:
\begin{gather*}
y(t+1)
=
a_1y(t)+a_2y(t-1)+
b_1u(t)+b_2u(t-1)+w(t)
\end{gather*}
with parameters belonging to the uncertainty set ${\cal S}$:
\begin{gather*}
{\cal S}:=\biggl\{\left[ 
\begin{matrix}
a_1 & a_2 & b_1 & b_2
\end{matrix} \right]^\top\in{\R}^{4}:
\\
a_1\in[-2,0],a_2\in[-3,-1],b_1\in[-1,0],b_2\in[-5,-3]
\biggr\}.
\end{gather*}
Hence, every admissible plant model is unstable and non-minimum phase, which makes this plant challenging to control; it has two complex unstable poles together with a zero that can lie in $[3,\infty)$. It is also obvious that ${\cal S}$ is a convex set. We define the set $\overline{\cal S}$ by \eqref{const_multi}:
\begin{gather*}
\overline{\cal S}:=\biggl\{\left[ 
\begin{matrix}
\overline a_1 & \overline a_2 & \overline a_3 & b_1 & b_2
\end{matrix} \right]^\top\in{\R}^{5}:
\\
\bar a_1\in[-1,1],\bar a_2\in[-3,1],
\bar a_3\in[1,3],b_1\in[-1,0],b_2\in[-5,-3]
\biggr\},
\end{gather*}
which will be utilized in estimating the parameters of the auxiliary plant
$
\theta^*
\in \overline{\cal S}.
 $
It is clear that the set $\overline{\cal S}$ is also compact and convex, and satisfies the coprimeness requirement.

For this simulation we set $a_1 = -\tfrac{1}{2}$, $a_2 = -\tfrac{3}{2}$, $b_1 = -\tfrac{3}{4}$, and $b_2= -3$. We will apply the proposed controller \eqref{est2}, 
\eqref{controlM_para3}, \eqref{control_ax} and \eqref{control4}
where we choose $\mathbf A^*(z^{-1}) = 1- \tfrac{3}{5}z^{-1} $ where clearly the roots of $z^{2n+1}\mathbf A^*(z^{-1})= z^5 - \tfrac{3}{5}z^{4} $ are inside the unit desk as required.
We set the desired reference signal to be of a constant magnitude: $|y^*(t)|=2$ but with its sign changing every $200$ steps.
Set initial conditions to $y(0)=y(-1)=y(-2)=-1$ and $u(-1)=u(-2)=0$; we also set
$\theta_0=
\begin{bmatrix} 0 & -1 & 2 & -\tfrac{1}{2}& -4
\end{bmatrix}
^\top$.
We also add a disturbance of constant magnitude: $|w(t)|=\tfrac{1}{2}$, but with its sign changing every $250$ steps.
Figures \ref{figex1a} and \ref{figex1b} displays the results. 
We see that the controller does a good job of tracking; the closed-loop system experiences some transient behavior when the 
reference signal changes or when the
disturbance changes, but the tracking recovers quickly.

 \begin{figure}
\center\includegraphics[trim=25 10 25 10,clip,width=1\columnwidth]{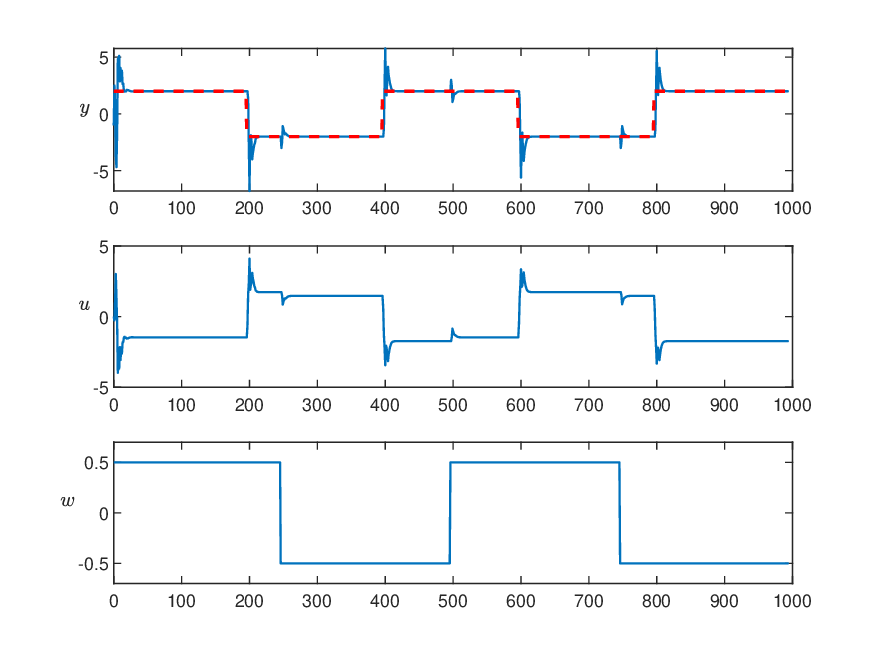}
\vspace{-1.5em}
\caption{The top plot shows the output and the reference signals;
the middle plot shows the control input;
the bottom plot shows the disturbance signal.}
\vspace{-1em}
\hfill
\label{figex1a}
\end{figure}

 \begin{figure}
\center\includegraphics[trim=25 10 25 10,clip,width=1\columnwidth]{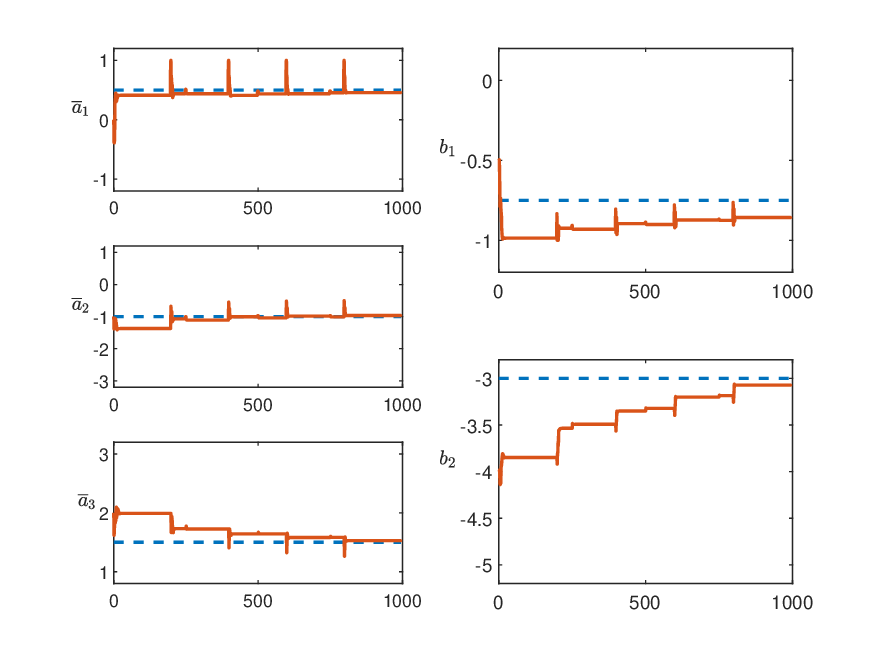}
\vspace{-1.5em}
\caption{The plots show the estimated (solid) and the actual (dashed) parameters.}
\vspace{-1em}
\hfill
\label{figex1b}
\end{figure}

\section{Summary and Conclusions}

In this paper 
we examine 
the set-point tracking
problem when the 
commonly used
projection
algorithm is utilized, subject to several common assumptions on the
set of admissible parameters.
While in the literature it is proven that the nonlinear closed-loop
system is bounded-input bounded-state, here we prove 
a linear-like bound on the closed-loop behavior
consisting of
1)
a decaying exponential on the initial condition,
2)
a linear-like convolution bound on the noise input, and
3)
a constant 
scaled by the square root of the constant in the denominator
of the estimator update law and the set-point.
Furthermore, in the constant disturbance case, 
asymptotic tracking is proven as well.

We would like to extend these desirable results 
to the 
cases when 
the uncertainty set is non-convex and where its convex hull violate 
the comprimeness requirement and/or the plant order is not exactly known;
an adaptive controller 
using multiple estimators and switching, along the lines of \cite{TAC22},
may be needed.

\section*{Acknowledgments}
The author would like to thank Daniel E. Miller for valuable discussions that contributed to the development of this paper.

\bibliographystyle{IEEEtranSmod_v2}
\balance
\bibliography{cdc25_pp_arxiv_v1.bib}



\pagebreak

\appendix

Before proceeding, define
\[
\overline \alpha := \max_{\theta^* \in \overline{\cal S}} \| {\cal A}_{\theta^*} \|,
\qquad 
\overline{\mathbf s} := \max_{\theta_1, \theta_2 \in \overline{\cal S}} \|\theta_1 - \theta_2 \|,
\]
and 
let $\lceil\cdot\rceil$ denote the ceiling function.
We want also to find a crude bound on the closed-loop behavior. 
From the definitions of the auxiliary plant \eqref{plant2}
and the prediction error \eqref{predict1}, it is easy to see that
\begin{flalign}
e(t+1) &= - \tilde \theta (t)^\top \psi(t) + \overline w(t)
\label{pred_tilde1}
\\
\Rightarrow
|e(t+1)| &\leq \overline{\mathbf s}\|\psi(t) \| + |\overline w (t) |,
\quad t\geq t_0.
\end{flalign}
From this and \eqref{keyeq2}, we can obtain the following crude bound on $\psi(\cdot) $:
\begin{equation}
  \|\psi(t+1) \| 
  \leq 
  (\overline\alpha + \overline{\mathbf s})\|\psi(t) \| + |\overline w (t) |,
  \quad t\geq t_0.
  \label{crude2}
\end{equation}

The following result of Kreisselmeier's is useful in analyzing our closed-loop system.

\begin{prop}[\hspace{-.2pt}\textbf{\cite{kreiss2}}]
\label{prop2}
Consider the discrete-time system
\[
x (t+1) = [ A (t) + \Delta (t) ] x(t),
\]
and let $\Phi_{A+ \Delta} (t, \tau  ) $ to denote its state transition matrix.
Suppose that there exist constants
$\sigma \in (0,1)$, $\gamma_1 >1$, $\alpha_i \geq 0$, 
and $\beta_i \geq 0$ ($i=0,1,2$)
so that

\begin{description}
\item{
  (i)
  }
for all $t \geq t_0 $, we have
$\| A (t)^k \| \leq \gamma_1 \sigma^k , \; k \geq 0 $;

\item {
  (ii)
  }
for all $t > \tau\geq t_0 $ we have
\[
\sum_{k = \tau}^{t-1} \| A (k+1) - A (k) \| \leq
\alpha_0 + \alpha_1 ( t- \tau )^{1/2} + \alpha_2 ( t- \tau ),
\]
and
$$\sum_{k = \tau}^{t-1} \|  \Delta (k) \| \leq
\beta_0 + \beta_1 ( t- \tau )^{1/2} + \beta_2 ( t- \tau ); $$

\item{
  (iii)
}
there exists a $\rho \in ( \sigma , 1 )$ and $N \in \N$ satisfying
{
$$ \alpha_2 + \frac{\beta_2}{N} < \frac{\left( \frac{\rho}{\sqrt[N]{\gamma_1}} -
\sigma \right)}{N \gamma_1}.  $$
}
\end{description}
\unskip\noindent
Then
there exists a constant $\overline\gamma$ so that 
\[
\| \Phi_{A+ \Delta} (t, \tau  ) \| \leq \overline\gamma \rho^{t - \tau }  , \; t \geq \tau,
\]
where 
\begin{equation}
\overline\gamma = 
\left[
  \gamma_1 + \sum_{i=0}^2\beta_i
\right]^N 
\sup_{j\geq 0}
\left(
  \frac{\sqrt[N]{\gamma_1}}{\rho} \left[
    \sigma + \gamma_1 \sum_{i=0}^2 (N\alpha_i+\beta_i)(jN)^{\frac{i-2}{2}}
  \right]
\right)^{jN}.
\nonumber
\end{equation}
\end{prop}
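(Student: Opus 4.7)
My plan is to prove this Kreisselmeier-type bound via a window-based iterative estimate of the state transition matrix, leveraging condition (i) for frozen-matrix stability, condition (ii) to control accumulated perturbations, and condition (iii) to ensure geometric decay in the end. I partition the interval $[\tau, t]$ into windows of length $N$ starting at $\tau_{j'} := \tau + j'N$, obtain a per-window bound by comparing to the constant system $x(k+1) = A(\tau_{j'})x(k)$, and multiply the bounds across windows.

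The first step is a \emph{frozen-matrix product lemma}: if $\|A_0^k\| \leq \gamma_1 \sigma^k$ for all $k \geq 0$, then for any matrices $F_0, \ldots, F_{N-1}$,
\[
\left\|\prod_{k=0}^{N-1}(A_0 + F_k)\right\| \leq \gamma_1\left(\sigma + \gamma_1 \overline F\right)^N, \qquad \overline F := \tfrac{1}{N}\sum_{k=0}^{N-1}\|F_k\|.
\]
I would prove this by expanding the product into its $2^N$ terms; each term with $m$ $F$-factors and $N-m$ $A_0$-factors is bounded by grouping consecutive $A_0$'s into maximal blocks $A_0^{\ell}$ (using $\|A_0^\ell\| \leq \gamma_1 \sigma^\ell$) to give $\gamma_1^{m+1}\sigma^{N-m}\|F_{k_1}\|\cdots\|F_{k_m}\|$. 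Summing over the placements yields $\gamma_1 \sum_{m=0}^N \binom{N}{m}(\gamma_1 \overline F)^m \sigma^{N-m} = \gamma_1(\sigma + \gamma_1\overline F)^N$ via Maclaurin's inequality on elementary symmetric polynomials.

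The second step applies the lemma on each window $[\tau_{j'}, \tau_{j'+1}]$ with the choice $F_k := [A(\tau_{j'}+k) - A(\tau_{j'})] + \Delta(\tau_{j'}+k)$. Bounding $\|A(\tau_{j'}+k) - A(\tau_{j'})\| \leq \sum_{\ell} \|A(\ell+1) - A(\ell)\|$ and using condition (ii) (applied to the sum from $\tau$, together with monotonicity arguments for $j' \leq j-1$), the window-average satisfies
\[
\gamma_1\overline F^{(j')} \leq \gamma_1\sum_{i=0}^2 (N\alpha_i + \beta_i)(jN)^{(i-2)/2}.
\]
Iterating the per-window estimate $\|\Phi_{A+\Delta}(\tau_{j'+1}, \tau_{j'})\| \leq \gamma_1(\sigma + \gamma_1 \overline F^{(j')})^N$ over $j$ windows and rewriting $\gamma_1^j = (\gamma_1^{1/N})^{jN}$ gives
\[
\|\Phi_{A+\Delta}(\tau + jN, \tau)\| \leq \left(\gamma_1^{1/N}\left[\sigma + \gamma_1\sum_{i=0}^2(N\alpha_i + \beta_i)(jN)^{(i-2)/2}\right]\right)^{jN}.
\]
Condition (iii) ensures this factor is strictly less than $\rho$ for all $j$ sufficiently large (since the $i=0,1$ contributions decay as $j \to \infty$ and the $i=2$ contribution is controlled by the assumed strict inequality), hence $\|\Phi\| \leq \rho^{jN}$ past some threshold; the finite excess at small $j$ is absorbed into the $\sup_{j\geq 0}$ that defines the second factor in $\overline\gamma$. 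For $t = \tau + jN + r$ with $0 \leq r < N$, appending a partial window and using the crude per-step estimate $\|A(k) + \Delta(k)\| \leq \gamma_1 + \|\Delta(k)\|$ with $\sum\|\Delta(k)\| \leq \sum_i \beta_i$ contributes the prefactor $(\gamma_1 + \sum \beta_i)^N$.

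The main obstacle is the frozen-matrix product lemma in the first step. The subtle point is that we do \emph{not} have $\|A_0\| \leq \sigma$; only the collective bound $\|A_0^k\| \leq \gamma_1 \sigma^k$ is available, with $\gamma_1 > 1$. Hence a naive termwise estimate $\|A_0 + F_k\| \leq \|A_0\| + \|F_k\|$ loses the stability margin. The careful grouping of consecutive $A_0$-factors into maximal $A_0^\ell$ blocks before applying the spectral bound is precisely what prevents a blowup of $\gamma_1$-factors and yields the clean form $\gamma_1(\sigma + \gamma_1\overline F)^N$ needed for Steps 2--3.
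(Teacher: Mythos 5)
The paper does not prove this proposition at all: it is imported verbatim from Kreisselmeier \cite{kreiss2}, so there is no in-paper argument to compare against. Your reconstruction is, in architecture, exactly the argument that the explicit formula for $\overline\gamma$ encodes (length-$N$ windows, a frozen-coefficient comparison on each window, a product-expansion bound, and absorption of the partial window into the prefactor $[\gamma_1+\sum_i\beta_i]^N$), and your frozen-matrix product lemma is correct as proved: grouping maximal $A_0$-blocks gives $\gamma_1^{m+1}\sigma^{N-m}$ per term with $m$ perturbation factors, and Maclaurin's inequality on the elementary symmetric polynomials yields $\gamma_1(\sigma+\gamma_1\overline F)^N$. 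Your closing remark correctly identifies why the naive $\|A_0+F_k\|\leq\|A_0\|+\|F_k\|$ estimate is useless here.

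The one genuine soft spot is in Step 2. The inequality $\gamma_1\overline F^{(j')}\leq\gamma_1\sum_{i=0}^{2}(N\alpha_i+\beta_i)(jN)^{(i-2)/2}$ cannot hold \emph{per window}: for the first window, $\overline F^{(0)}$ is bounded below by a fixed quantity of order $\alpha_0+\alpha_1 N^{1/2}+\alpha_2 N$, whereas the right-hand side's $i=0,1$ contributions vanish as $j\to\infty$. What condition (ii) applied on $[\tau,\tau+jN)$ actually controls is the \emph{arithmetic mean} $\frac{1}{j}\sum_{j'=0}^{j-1}\overline F^{(j')}\leq\sum_{i=0}^{2}(N\alpha_i+\beta_i)(jN)^{(i-2)/2}$. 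To pass from the per-window factors to your displayed bound you therefore need one more ingredient you did not state: the AM--GM (equivalently, concavity of $x\mapsto\log(\sigma+\gamma_1 x)$) step
\[
\prod_{j'=0}^{j-1}\bigl(\sigma+\gamma_1\overline F^{(j')}\bigr)^{N}\;\leq\;\Bigl(\sigma+\gamma_1\tfrac{1}{j}\textstyle\sum_{j'}\overline F^{(j')}\Bigr)^{jN}.
\]
With that inserted, your final displayed inequality and the finiteness of the supremum under condition (iii) both go through. A second, purely cosmetic issue: handling the trailing partial window of length $r<N$ at the end of $[\tau,t]$ as you describe produces an extra factor $\rho^{-r}\leq\rho^{-N}$ relative to the quoted $\overline\gamma$; this affects only the bookkeeping of the constant, not the claimed exponential rate.
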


\bigskip

\section*{Proof of Proposition \ref{lemma_main}}

\begin{proof}
Fix $\lambda \in ( \underline{\lambda}, 1)$. 
Let $t_0 \in \Z$, 
$\theta_0 \in \overline{\cal S}$,
$\theta \in {\cal S}$,
$\boldsymbol\phi_0 \in \R^{2(n+1)}$,
$\mu>0 $,
$r\in\R $
and 
$w\in \ellb_\infty$ be arbitrary;
as well,
let
$[\underline t,\overline t ] \subset [t_0,\infty ) $ 
be an arbitrary interval
which satisfies 
$\|\psi(t)\|^2 \geq \mu, 
\,
t\in[\underline t,\overline t).$
Now choose
$\lambda_1 \in 
( \underline{\lambda} ,  \lambda )$.

We are going to utilize Proposition \ref{prop2} in analyzing the closed-loop system.
As the characteristic polynomial of
${\cal A}_{\hat\theta(t)}$ is exactly $z^{2n+1} A^* (z^{-1})$ 
for every $t \geq t_0$,
and as
the coefficients of
$\mathbf L_{\hat{\theta}(t)} (z^{-1} )$ and $\mathbf P_{\hat{\theta}(t)} (z^{-1} )$ are the solution of a linear
equation and are analytic functions of
$\hat{\theta} (t) \in \overline{\cal S}$,
then there exists a constant $\overline\alpha$ so that, for all initial
conditions, $w \in \ellb_\infty$ and $r \in \R$, we have
$\sup_{t \geq t_0} \| {\cal A}_{\hat\theta(t)} \| \leq \overline\alpha $.
Then we apply the argument from \cite{desoer} (which considers a more general time-varying situation) to show that 
there exists a $\overline c>0$ so that for every $t \geq t_0$ we have
\begin{equation}
\| {\cal A}_{\hat\theta(t)} ^k \| \leq \overline c \lambda_1^{k} , \;\; k \geq 0. 
\label{cond1}
\end{equation}

We now provide a number of useful bounds.
Let us obtain a bound on $\Xi(\cdot) $ defined in \eqref{delta_2};
by
the Cauchy–Schwarz
inequality, we see that 
\begin{flalign}
\sum_{j=\tau}^{t-1 }\|\Xi(j) \|  
&\leq
\sum_{j=\tau}^{t-1 }\frac{|e(j+1)|}{\|\psi(j) \|}  
\nonumber 
\\
&\leq 
\left[\sum_{j=\tau}^{t-1 }\frac{|e(j+1)|^2}{\|\psi(j) \|^2} \right]^{\frac{1}{2}} 
(t-\tau)^{\frac{1}{2}},
\nonumber 
\\
&
\qquad \quad 
\overline t \geq t > \tau \geq \underline t.
\label{delta_3}
\end{flalign}
Next, by applying Proposition \ref{est_prop}(ii) and
as
the coefficients of
$\mathbf L_{\hat\theta(t)} (z^{-1} )$ and $\mathbf P_{\hat\theta(t)} (z^{-1} )$ are analytic functions of
$\hat{\theta} (t) \in \overline{\cal S}$, 
there exists $\overline K >0$ so 
we can obtain 
\begin{flalign}
\sum_{j=\tau}^{t-1 }\| {\cal A}_{\hat\theta(j+1)} &- {\cal A}_{\hat\theta(j)}   \|  
\leq
\sum_{j=\tau}^{t-1 }
\| \hat\theta(j+1) - \hat\theta(j)   \|  + 
\nonumber 
\\ 
&\qquad
\| K_{\hat\theta(j+1)} - K_{\hat\theta(j)}   \|
\nonumber 
\\
&\leq
\sum_{j=\tau}^{t-1 }
(1 + \overline K)\| \hat\theta(j+1) - \hat\theta(j)   \|
\nonumber 
\\
&\leq
\sum_{j=\tau}^{t-1 }
(1 + \overline K) \frac{|e(j+1)|}{\|\psi(j) \|}  
\nonumber 
\\
&\leq 
(1 + \overline K) 
\left[\sum_{j=\tau}^{t-1 }\frac{|e(j+1)|^2}{\|\psi(j) \|^2} \right]^{\frac{1}{2}} 
(t-\tau)^{\frac{1}{2}},
\nonumber 
\\
&
\qquad \quad
\overline t \geq t > \tau \geq \underline t.
\label{Ak_2}
\end{flalign}

Now we consider the closed-loop system behavior 
on $[\underline t, \overline t ] $.
To proceed, 
we partition the
interval into two parts: one in which the noise $\overline w(\cdot) $ is small versus 
$\psi(\cdot) $
and one where it is not.
To this end, 
with $\mathfrak{v}>0 $ to be chosen shortly, 
define
\begin{equation}
S_{\text{\sf good}}:=
\left\{
j\in[\underline t,\overline t ] : 
\tfrac{|\overline w(j) |^2 }{\|\psi(j) \|^2 } 
< \mathfrak{v}
\right\},
S_{\text{\sf bad}}:=
\left\{
j\in[\underline t,\overline t ] : 
\tfrac{|\overline w(j) |^2 }{\|\psi(j) \|^2 } \geq \mathfrak{v}
\right\};
\nonumber
\end{equation}
clearly $[\underline t,\overline t ] = S_{\text{\sf good}} \cup S_{\text{\sf bad}}$.
Observe that this partition implicitly 
depends on $\theta\in{\cal S} $, as well as the initial
conditions. We will easily obtain bounds on the closed-loop system behavior on $S_{\text{\sf bad}}$; 
we will apply Proposition \ref{prop2} 
to \eqref{keyeq}
to analyze the behavior on $S_{\text{\sf good}}$.
Before proceeding, we partition the timeline
into intervals which oscillate between $S_{\text{\sf good}}$
and $S_{\text{\sf bad}}$. To this end, it is easy to see
that we can define a sequence of intervals of the form $[k_i,k_{i+1}) $ satisfying:
(i) $k_0=\underline t $; (ii) $[k_i,k_{i+1}) $ either belongs to
$S_{\text{\sf good}}$ or $S_{\text{\sf bad}}$; and
(iii) if $k_{i+1}\neq \overline t $ and $[k_i,k_{i+1}) $ belongs
to $S_{\text{\sf good}}$ (respectively, $S_{\text{\sf bad}}$), then the interval $[k_{i+1},k_{i+2}) $ must belong to
$S_{\text{\sf bad}}$ (respectively, $S_{\text{\sf good}}$).

Now we analyze the closed-loop behavior on each interval.

\noindent

\underline{Step 1 {\it(The behavior on $S_{\text{\sf bad}}$).}}
Let $[k_i,k_{i+1})\subset S_{\text{\sf bad}} $ be arbitrary. In this case,
$\tfrac{|\overline w(j) |^2}{\|\psi(j) \|^2 } \geq \mathfrak{v}, j\in [k_i,k_{i+1}) $, 
so,
\begin{flalign}
\|\psi(j) \|\leq \tfrac{1}{\sqrt{\mathfrak{v}} } 
|\overline w(j) |,
\quad j\in[k_i,k_{i+1} );
\label{bad_bd1}
\end{flalign}
then from the crude model \eqref{crude2} 
we have
\begin{flalign}
\|\psi(j+1) \|
&
\leq
\left(\tfrac{\overline\alpha+\overline{\mathbf s}}{\sqrt{\mathfrak{v}} }+1\right)
|\overline w(j) | 
\quad
j\in[k_i,k_{i+1});
\nonumber
\end{flalign}
combining this with \eqref{bad_bd1} yields:
\begin{equation}
\|\psi(j) \|
\leq 
\left\lbrace
\begin{matrix*}[l]
\tfrac{1}{\sqrt{\mathfrak{v}} }
|\overline w(j) |,
& j=k_i
\\ 
\left(\tfrac{\overline\alpha+\overline{\mathbf s}}{\sqrt{\mathfrak{v}} }+1\right)
|\overline w(j-1) | 
,
&j=k_i+1,\ldots,k_{i+1}.
\end{matrix*}
\right.
\label{badCase_bd1}
\end{equation}

\underline{Step 2 {\it(The behavior on $S_{\text{\sf good}}$).}}
Suppose that $[k_i,k_{i+1} ) $ lies in $S_{\text{\sf good}}$.
By Proposition \ref{est_prop}(ii) and using the facts 
that $\|\tilde \theta(j) \|\leq \overline{\mathbf s} $
and  
$\frac{|\overline w(j)|^2 }{\|\psi(j) \|^2}< \mathfrak{v}   $ for $j\in[k_i,k_{i+1} ) $,
from \eqref{delta_3}
we obtain:
\begin{flalign}
&\sum_{j=\tau}^{t-1}
\|\Xi(j)\|
\leq 
\left[
4\overline{\mathbf s}^2+
 8
\mathfrak{v}(t-\tau)
\right]^\frac{1}{2}
(t-\tau)^\frac{1}{2} 
\nonumber
\\
&\leq 
2\overline{\mathbf s}(t-\tau)^\frac{1}{2}+
 (8
\mathfrak{v})^\frac{1}{2} (t-\tau)
 ,
\quad \underline k_{i+1} \geq t>\tau \geq \overline k_i.
\label{delta_4}
\end{flalign}
In a similar manner, we can obtain from \eqref{Ak_2}:
\begin{flalign}
&\sum_{j=\tau}^{t-1}
\| {\cal A}_{\hat\theta(j+1)} - {\cal A}_{\hat\theta(j)}   \|
\leq 
2\overline{\mathbf s}(1+\overline K)(t-\tau)^\frac{1}{2}+
\nonumber
\\
&\quad
 (1+\overline K)(8
\mathfrak{v})^\frac{1}{2} (t-\tau)
 ,
\quad \underline k_{i+1} \geq t>\tau \geq \overline k_i.
\label{Ak_4}
\end{flalign}
We should now apply Proposition \ref{prop2} to \eqref{keyeq}: 
set $\alpha_0=\beta_0=0, \alpha_1=2\overline{\mathbf s}(1+\overline K), \alpha_2=(1+\overline K)(8
\mathfrak{v})^\frac{1}{2}, 
\beta_1=2\overline{\mathbf s}, \beta_2=(8
\mathfrak{v})^\frac{1}{2},
\sigma=\lambda_1,\text{ and } \gamma_1=\overline c$;
we see that Conditions (i) and (ii) of Proposition \ref{prop2} are satisfied according to \eqref{cond1}, \eqref{delta_4} and \eqref{Ak_4}. 
Now set $\rho=\lambda $ and choose
$N := \left\lceil \tfrac{\ln \overline c }{\ln\lambda - \ln \lambda_1}
\right\rceil,$
so $\frac{\lambda}{\sqrt[N]{\overline c}}-\lambda_1 >0 $;
we are now able to satisfy Condition (iii) of Proposition \ref{prop2} if we set
$\mathfrak{v}:=
\left(
  \tfrac{
    \frac{\lambda}{\sqrt[N]{\overline c}}-\lambda_1
    }{
4\overline c (N(1+\overline K)+1)
  }
\right)^2.$
Then applying Proposition \ref{prop2} to the discrete-time system \eqref{keyeq},
with $A(t)={\cal A}_{\hat\theta(t)} $ and $\Delta(t)=\Xi(t) $ in this case,
yields that there exists a constant $\overline\gamma $ so that
$\|\boldsymbol\Phi_{A+\Delta} (t,\tau)\|
\leq
\overline\gamma \lambda^{t-\tau},
\,
k_{i+1} \geq t>\tau \geq k_i.$
This means that 
\begin{equation}
  \|\psi(t) \| \leq 
  \overline\gamma \lambda^{t-\tau} \|\psi(\tau) \| ,
\quad  k_{i+1} \geq t>\tau \geq  k_i,
\label{case2_prop}
\end{equation}
which completes Step 2.

\underline{Step 3 {\it(The behavior on $[\underline t,\overline t]$).}}
We now glue together the bounds on $S_{\text{\sf good}}$ and $S_{\text{\sf bad}}$ to obtain a bound which holds on all of $[\underline t,\overline t]$.
If $[k_0,k_1 ) = [\underline t,k_1 ) \subset S_{\text{\sf good}}  $,
then \eqref{prop_psi} holds for $[\underline t,k_1 ]$  
by \eqref{case2_prop} as long as $c \geq \overline\gamma $. 
If $[k_0,k_1 ) = [\underline t,k_1 ) \subset S_{\text{\sf bad}}  $,
then from \eqref{badCase_bd1} 
we see that 
\begin{equation}
\|\psi(j) \|
\leq 
\left\lbrace
\begin{matrix*}[l]
\|\psi(\underline t) \|
,
& j=k_0=\underline t
\\ 
\left(\tfrac{\overline\alpha+\overline{\mathbf s}}{\sqrt{\mathfrak{v}} }+1\right)
|\overline w(j-1) | 
,
&j=k_0+1,\ldots,k_{1},
\end{matrix*}
\right.
\nonumber
\end{equation}
so
\eqref{prop_psi} holds as well as long as 
$c \geq \tfrac{\overline\alpha+\overline{\mathbf s}}{\sqrt{\mathfrak{v}} }+1 $.

We now use induction;
suppose that \eqref{prop_psi} holds for
$[k_0,k_l ] $; we need to prove that 
it holds for $t\in(k_l, k_{l+1} ] $ as well. 
If $[k_l,k_{l+1} ) \subset S_{\text{\sf bad}} $,
then from \eqref{badCase_bd1} we see that 
\eqref{prop_psi} holds on $(k_l, k_{l+1} ]$ also as long as $c \geq \tfrac{\overline\alpha+\overline{\mathbf s}}{\sqrt{\mathfrak{v}} }+1  $. On the other hand, 
if $[k_l,k_{l+1} ) \subset S_{\text{\sf good}}  $,
then $k_l-1 \in S_{\text{\sf bad}}  $, so
from \eqref{badCase_bd1} we have that
$\|\psi(k_l )\|
\leq
\left(
  \tfrac{\overline\alpha+\overline{\mathbf s}}{\sqrt{\mathfrak{v}} }+1
\right)
|\overline w(k_l -1) |;$
combining this with \eqref{case2_prop},
we have 
\begin{flalign}
&\|\psi(t)\| 
\leq 
\overline\gamma \lambda^{t-k_l } \|\psi (k_l ) \|
\nonumber
\\
&
\leq 
\overline\gamma 
\left(
  \tfrac{\overline\alpha+\overline{\mathbf s}}{\sqrt{\mathfrak{v}} }+1
\right)
\lambda^{t-k_l } |\overline w (k_l-1 ) |,
\quad t\in[k_l , k_{l+1}  ].
\end{flalign}
So the bound \eqref{prop_psi}
holds as long as $c \geq \overline\gamma 
\left(
  \tfrac{\overline\alpha+\overline{\mathbf s}}{\sqrt{\mathfrak{v}} }+1
\right) $.
This concludes the proof of Proposition \ref{lemma_main}.
\end{proof}


\section*{Proof of Theorem \ref{theorem_main}}

\begin{proof}
Fix $\lambda \in ( \underline{\lambda}, 1)$. 
Let $t_0 \in \Z$, 
$\theta_0 \in \overline{\cal S}$,
$\theta \in {\cal S}$,
$\boldsymbol\phi_0 \in \R^{2(n+1)}$,
$\mu>0 $,
$r\in\R $
and 
$w\in \ellb_\infty$ be arbitrary.

\underline{Proving Part i) of Theorem \ref{theorem_main}.}
First,
define the times where $\psi(\cdot) $ is large:
\[
S_{\sf L} :=
\{
t\geq t_0 : \|\psi(t)\|^2 \geq \mu
\},
\]
and the times where $\psi(\cdot) $ is small:
\[
S_{\sf S} :=
\{
t\geq t_0 : \|\psi(t)\|^2 < \mu
\}.
\]
Observe that this partition clearly depends 
on $\theta_0 $, $\theta $, $ \boldsymbol\phi_0 $, $r$, and $w$.
We will apply Proposition \ref{lemma_main} to analyze
 the closed-loop behavior on sub-intervals of 
 $S_{\sf L}$;
 we will analyze the closed-loop behavior on $S_{\sf S}$ in a direct manner. 
 Before doing so,
 we 
  we partition the timeline
into intervals which oscillate between $S_{\sf L}$
and $S_{\sf S}$. To this end, it is easy to see
that we can define a (possibly infinite) sequence of intervals of the form $[k_l,k_{l+1}) $ satisfying:
(i) $k_0=t_0 $; (ii) $[k_l,k_{l+1}) $ either belongs to
$S_{\sf L}$ or $S_{\sf S}$; and
(iii) if $k_{l+1}\neq \infty $ and $[k_l,k_{l+1}) $ belongs
to $S_{\sf L}$ (respectively, $S_{\sf S}$), then the interval $[k_{l+1},k_{l+2}) $ must belong to
$S_{\sf S}$ (respectively, $S_{\sf L}$).

Let $[k_l,k_{l+1} ) \subset S_{\sf S} $ be arbitrary; 
then we clearly have
\[
\|\psi(t) \| \leq \sqrt{\mu}, \quad t\in[k_l,k_{l+1} );
\]
by using \eqref{crude2} we have 
\begin{flalign}
\|\psi(k_{l+1}) \| 
&\leq 
(\overline{\mathbf s}+\overline\alpha)
\|\psi(k_{l+1}-1) \| + |\overline w(k_{l+1}-1) | 
\nonumber 
\\
&\leq 
(\overline{\mathbf s}+\overline\alpha)
\sqrt{\mu} + |\overline w(k_{l+1}-1) | .
\end{flalign}
Combining the above 
\begin{equation}
\label{bad_thm1}
\|\psi(t) \| \leq
\begin{cases}
  \sqrt{\mu} & t\in[k_l,k_{l+1} )
  \\
  (\overline{\mathbf s}+\overline\alpha)
\sqrt{\mu} + |\overline w(k_{l+1}-1) |
& t=k_{l+1}.
\end{cases}
\end{equation}

Now let $[k_l,k_{l+1} ) \subset S_{\sf L} $ be arbitrary. By Proposition \ref{lemma_main}
we know that there exists a constant $c$ so that 
\begin{equation} 
\| \psi (t) \| 
\leq
 c  \lambda^{t-k_l} \|\psi(k_l) \|
+
\sum_{j=k_l}^{t-1} 
c\lambda^{t-j-1} |\overline w(j)|, 
\quad  
t \in [k_l,k_{l+1}].
\label{prop_psi_thm}
\end{equation}

\begin{claim}
\label{claim_no1}
There exists a constant $\gamma $
so that the following bound holds:
\begin{equation}
\|\psi(t)\| 
\leq 
\gamma \lambda^{t-t_0 } \|\psi (t_0 ) \|
+
\gamma
\sum_{j=t_0 }^{t-1}  \lambda^{t-j-1 }
|\overline w(j)|
+\gamma\sqrt{\mu}
,
\quad t\geq t_0 .
\label{thmX_claim1}
\end{equation}
\end{claim}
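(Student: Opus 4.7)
The plan is to prove \eqref{thmX_claim1} by direct case analysis on which interval $[k_l, k_{l+1})$ contains $t$, leveraging the fact that these intervals alternate between $S_{\sf L}$ and $S_{\sf S}$. No cumulative induction on $l$ is needed: each $l$ stands alone, because the $S_{\sf L}$-bound \eqref{prop_psi_thm} requires only control of the single boundary value $\|\psi(k_l)\|$, and that value is supplied either by the initial condition (when $l=0$, so $k_l = t_0$) or by \eqref{bad_thm1} applied at the right endpoint of the preceding $S_{\sf S}$-interval (when $l \geq 1$).

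In the case $[k_l, k_{l+1}) \subset S_{\sf S}$, \eqref{bad_thm1} directly yields $\|\psi(t)\| \leq \sqrt{\mu}$ for $t \in [k_l, k_{l+1})$ and $\|\psi(k_{l+1})\| \leq (\overline{\mathbf s}+\overline\alpha)\sqrt{\mu} + |\overline w(k_{l+1}-1)|$; since $k_{l+1}-1 \geq k_l \geq t_0$, the noise term is dominated by the $j = k_{l+1}-1$ summand of $\sum_{j=t_0}^{t-1}\lambda^{t-j-1}|\overline w(j)|$ (note $\lambda^0=1$). In the case $[k_l, k_{l+1}) \subset S_{\sf L}$, I substitute into \eqref{prop_psi_thm} either $\|\psi(k_l)\|=\|\psi(t_0)\|$ (when $l=0$) or the preceding $S_{\sf S}$-endpoint bound $\|\psi(k_l)\| \leq (\overline{\mathbf s}+\overline\alpha)\sqrt{\mu} + |\overline w(k_l-1)|$ (when $l\geq 1$); in the latter case, using $\lambda^{t-k_l}=\lambda^{t-(k_l-1)-1}$ I fuse the residual $c\lambda^{t-k_l}|\overline w(k_l-1)|$ into the convolution sum as the $j=k_l-1$ term (valid since $k_l-1\geq t_0$), obtaining $\|\psi(t)\| \leq c(\overline{\mathbf s}+\overline\alpha)\sqrt{\mu} + c\sum_{j=t_0}^{t-1}\lambda^{t-j-1}|\overline w(j)|$.

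Finally, it suffices to take $\gamma := \max\{c,\; c(\overline{\mathbf s}+\overline\alpha),\; \overline{\mathbf s}+\overline\alpha+1\}$, since the $\gamma\lambda^{t-t_0}\|\psi(t_0)\|$ contribution in \eqref{thmX_claim1} is non-negative and may be added for free in every case. I expect the only delicate step to be the reindexing that merges the preceding $S_{\sf S}$-endpoint noise sample into the convolution sum; this works precisely because the $S_{\sf S}$ boundary bound in \eqref{bad_thm1} depends on $\sqrt{\mu}$ and a single noise sample — never on any earlier $\|\psi\|$ value — so constants cannot compound across many $S_{\sf L}/S_{\sf S}$ alternations. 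This non-compounding is what would otherwise defeat a naive inductive argument and is the structural reason the bound is genuinely linear-like.
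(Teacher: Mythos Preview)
Your proposal is correct and follows essentially the same route as the paper: case analysis on whether $[k_l,k_{l+1})\subset S_{\sf S}$ or $S_{\sf L}$, with the key reindexing $\lambda^{t-k_l}=\lambda^{t-(k_l-1)-1}$ to absorb the $S_{\sf S}$-endpoint noise sample $|\overline w(k_l-1)|$ into the convolution sum. The paper dresses this up as an induction on $l$, but---as you correctly observe---the inductive hypothesis is never invoked in the step, so your direct case-by-case formulation is equivalent and arguably cleaner.
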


\begin{subproof}[Proof of Claim \ref{claim_no1}]
If $[k_0,k_1 ) = [t_0,k_1 ) \subset S_{\sf L}  $,
then \eqref{thmX_claim1} holds for $[t_0,k_1 ]$  
by \eqref{prop_psi_thm} as long as $\gamma \geq c $. 
If $[k_0,k_1 ) = [t_0,k_1 ) \subset S_{\sf S}  $,
then from \eqref{bad_thm1} we see that 
\eqref{thmX_claim1} also holds for $[t_0,k_1 ]$ as long as 
$\gamma \geq \overline{\mathbf s}+\overline\alpha+1 $.

We now use induction;
suppose that \eqref{thmX_claim1} holds for
$[k_0,k_l ] $; we need to prove that 
it holds for $t\in(k_l, k_{l+1} ] $ as well. 
If $[k_l,k_{l+1} ) \subset S_{\sf S}  $,
then from \eqref{bad_thm1} we see that 
\eqref{thmX_claim1} holds on $(k_l, k_{l+1} ]$ also as long as $\gamma \geq \overline{\mathbf s}+\overline\alpha+1  $. On the other hand, 
if $[k_l,k_{l+1} ) \subset S_{\sf L}  $,
then $k_l-1 \in S_{\sf S}  $; 
from \eqref{bad_thm1} we have that
$\|\psi(k_l )\|
\leq
(\overline{\mathbf s}+\overline\alpha)
\sqrt{\mu} + |\overline w(k_{l}-1) |;$
combining this with \eqref{prop_psi_thm}
yields
\begin{flalign}
&\| \psi (t) \| 
\leq
 c  \lambda^{t-k_l} [
 (\overline{\mathbf s}+\overline\alpha)
\sqrt{\mu} + |\overline w(k_{l}-1) | ]
+
\nonumber 
\\
&\qquad\quad
\sum_{j=k_l}^{t-1} 
c\lambda^{t-j-1} |\overline w(j)|, 
\nonumber
\\
&\leq
\sum_{j=k_l-1}^{t-1} 
c\lambda^{t-j-1} |\overline w(j)|
+
 c(\overline{\mathbf s}+\overline\alpha)
\sqrt{\mu} ,
\quad t\in[k_l , k_{l+1}  ].
\nonumber
\end{flalign}
So the bound \eqref{thmX_claim1}
holds as long as $\bar\gamma \geq  c(\overline{\mathbf s}+\overline\alpha+1)
 $.
\end{subproof}

We now want to convert the bound on $\psi(\cdot) $ in \eqref{thmX_claim1}
to a bound on $\phi(\cdot) $ including values of the plant input and output.
To this end, we construct a state-space model fot the closed-loop system.
By Assumption \ref{assume1}, for every $\theta\in{\cal S}$, 
there exist $A_{\theta}\in\R^{n\times n}$, $B_{\theta}\in\R^{n\times 1}$, $C_{\theta}\in\R^{1\times n}$
and $F_{\theta}\in\R^{n\times 1}$ so that
we obtain an $n{\text{th}} $-order state-space representation of the original plant
\eqref{plant1}
(in controllable canonical form):
\begin{subequations}
\label{state_space1}
\begin{eqnarray}
x(t+1)
&=&
A_{\theta} x(t) + B_{\theta}u(t) + F_{\theta}w(t) \\
y(t)
&=&
C_{\theta}x(t)
\end{eqnarray}
\end{subequations}
where $(A_{\theta},B_{\theta}) $ are controllable
and 
$(C_{\theta},A_{\theta}) $ are observable;
since ${\cal S}$ is compact, the set of all such 
$(A_{\theta},B_{\theta},C_{\theta},F_{\theta})$
is compact as well. 
We can combine this state-space model
with the definition \eqref{ubar_def}
to obtain the $(n+1)\text{th}$-order system
\begin{subequations}
\label{aug_sys1}
 \begin{flalign}
\begin{bmatrix}
x(t+1) \\ u(t+1)
\end{bmatrix}
&=
\underbrace{
\begin{bmatrix}
A_\theta & B_\theta \\
0 & 1
\end{bmatrix}
}_{=:\overline A_\theta}
\underbrace{
\begin{bmatrix}
x(t) \\ u(t)
\end{bmatrix}
}_{=:\overline x(t)}
+
\nonumber 
\\ 
&\qquad 
\begin{bmatrix}
F_\theta
\\ 
0 
\end{bmatrix}
w(t)
+
\begin{bmatrix}
\boldsymbol 0_{n\times 1}
\\ 
1
\end{bmatrix}
\overline u(t+1)
 \\
\overline y(t)
&=
\underbrace{
\begin{bmatrix}
C_\theta & 0
\end{bmatrix}
}_{=:\overline C_\theta}
\begin{bmatrix}
x(t) \\ u(t)
\end{bmatrix}
-
r
.
\end{flalign}
\end{subequations}
Since \eqref{state_space1} is controllable and observable and does not have a zero at $1$ (Assumption \ref{assm2}), 
it follows that $(\overline C_\theta, \overline A_\theta)$ is observable; hence, there exists a unique matrix $ H$ such that the eigenvalues of $\overline A_\theta+  H \overline C_\theta $ are all zero
and it is well-known that $H$ is a continuous function of $\overline A_\theta $ and $\overline C_\theta $. 
Now rewrite \eqref{aug_sys1} as
\begin{flalign*}
\overline x(t+1) 
&=
\left[\overline A_\theta+  H \overline C_\theta\right]
\overline x(t) 
-H 
\overline y (t) 
+ 
\nonumber
\\
& 
\begin{bmatrix}
F_\theta
\\ 
0 
\end{bmatrix}
w(t)
+
\begin{bmatrix}
\boldsymbol 0_{n\times 1}
\\ 
1
\end{bmatrix}
\overline u(t+1)
-Hr
;
\end{flalign*}
noting that $[\overline A_\theta +  H \overline C_\theta]^{j}=0$ for all $ j\geq n+1 $, the solution of the above equation is
\begin{flalign*}
\overline x(t) 
&= 
\sum_{j=1}^{n+1} 
\left[
\overline A_\theta + H \overline C_\theta
\right]^{j-1}
\biggl( 
-H 
\overline y (t-j) 
+ 
\nonumber 
\\
&\quad
\begin{bmatrix}
F_\theta
\\ 
0 
\end{bmatrix}
w(t-j)
+
\begin{bmatrix}
\boldsymbol 0_{n\times 1}
\\ 
1
\end{bmatrix}
\overline u(t-j+1)
-Hr
\biggr),
\nonumber 
\\
&\quad
\qquad t\geq t_0+n+1.
\end{flalign*}
Observe that there exists a matrix $G_\theta $, a combination of
$\overline A_\theta, \overline C_\theta $ and $H$, so that  
\begin{flalign*}
\overline x(t) 
&= 
G_\theta \psi(t-1) + 
\begin{bmatrix}
\boldsymbol 0_{n\times 1}
\\ 
1
\end{bmatrix}
\overline u(t)
+
\nonumber 
\\
&\quad
\sum_{j=1}^{n+1} 
\left[
\overline A_\theta + H \overline C_\theta
\right]^{j-1}
\biggl( 
\begin{bmatrix}
F_\theta
\\ 
0 
\end{bmatrix}
w(t-j)
-Hr
\biggr),
\nonumber 
\\
&\quad
\qquad t\geq t_0+n+1.
\end{flalign*}
By the compactness of ${\cal S}$ and $\overline{\cal S}$,
and since $u(t) $ is an element of $\overline x(t) $,
then there exists a constant $c_1 $ such that 
\begin{flalign*}
 |u(t) |
&\leq  
c_1\|\psi(t-1)\| + 
|\overline u(t)|
+
c_1
|r|
+
c_1
\sum_{j=t-n-1}^{t-1} 
|w(j)|
,
\nonumber 
\\
&\quad
\qquad t\geq t_0+n+1.
\end{flalign*}
We can now use \eqref{thmX_claim1} to have a bound on $\|\psi(t-1) \| $ and on
$|\overline u(t) | $; after utilizing the definition of $\overline w(\cdot) $ and after simplification, we see that there exists $c_2$ 
such 
\begin{flalign*}
 |u(t) |
&\leq  
c_2 \lambda^{t-t_0 } \|\psi (t_0 ) \|
+
c_2
\sum_{j=t_0 }^{t-1}  \lambda^{t-j-1 }
| w(j)|
+
\nonumber 
\\
&\qquad
c_2(\sqrt{\mu}+|r|)
\qquad t\geq t_0+n+1.
\end{flalign*}
Using the fact that $y(t)=\overline y(t)+r $ and 
since $\phi(t)$ is made up of $y(t)$ and $u(t)$ and their delayed
versions, we conclude that there exists $c_3$ so that 
\begin{flalign*}
 \|\phi(t) \|
&\leq  
c_3 \lambda^{t-t_0 } \|\psi (t_0 ) \|
+
c_3
\sum_{j=t_0 }^{t-1}  \lambda^{t-j-1 }
| w(j)|
+
\nonumber 
\\
&\qquad
c_3(\sqrt{\mu}+|r|),
\qquad t\geq t_0+2n+1.
\end{flalign*}
Observe that 
$\|\psi(t_0) \|\leq 2\|\boldsymbol\phi_0\| + (n+1)|r| $, 
so substitute this into the above 
bound as well; 
finally, since nothing unusual can happen in the first $2n$ steps, we
can extend the bound on $\phi(t) $ back to $t_0$ by adjusting the constant $c_3$
appropriately to 
yield the desired bound on $\phi(t)$ concluding the proof of Part (i)
of Theorem \ref{theorem_main}.

\underline{Proving Part ii) of Theorem \ref{theorem_main}.}
We now proceed to prove asymptotic tracking for the case when $w(\cdot) $ is constant, i.e. $\overline w(t)=0 $ for all $t$.
We would like first to show the asymptotic convergence of the prediction error $e(\cdot) $.
From \eqref{thmX_claim1}, we can establish the boundedness of $\psi(\cdot) $ in this case:
\begin{equation}
  \|\psi(t) \| \leq 
\gamma(\|\psi(t_0) \| + \sqrt{\mu}), \quad t\geq t_0.
\end{equation}
So from Proposition \ref{est_prop}(i) we can obtain 
\begin{equation}
  \sum_{j=t_0}^{t-1}
  \frac{e(j+1)^2}{\mu + \|\psi(j) \|^2}
  \leq  
  2\|\tilde\theta(t_0) \|^2 
  \leq 
  2 \overline{\mathbf s}^2, \quad t\geq t_0,
\end{equation}
so we have 
\begin{flalign}
  \sum_{j=t_0}^{t-1}
  e(j+1)^2
  &\leq  
    (2 \overline{\mathbf s}^2)
  (\mu + \sup_{j\geq t_0} \|\psi(j) \|^2)
  \nonumber 
  \\
  &\leq  
    (2 \overline{\mathbf s}^2)
  ((1+2\gamma^2)\mu +2 \gamma^2\|\psi(t_0) \|^2)
    , \;\; t\geq t_0.
    \nonumber
\end{flalign}
This shows that $e(\cdot) $ is square-summable; furthermore, from \eqref{pred_tilde1}
we know that $e(\cdot)$ is bounded because of the boundedness of both $\tilde\theta $ and $\psi $; hence $e(t)\rightarrow 0 $, as $t\rightarrow \infty $.

Now, from \eqref{est2} and the fact that projection cannot make the estimation worse, we can easily show that
\begin{equation}
\|\hat\theta(t+1) - \hat\theta(t)  \|^2  
\leq
\frac{\|\psi(t) \|^2}{(\mu +\|\psi(t) \|^2 )^2} e(t+1)^2
\leq 
\frac{\|\psi(t) \|^2}{\mu ^2} e(t+1)^2
\nonumber 
,
\end{equation}
so we can obtain 
\begin{flalign}
\sum_{j=t_0}^{t-1}\|\hat\theta(j+1) - \hat\theta(j)  \|^2  
&\leq 
\frac{\sup_{j\geq t_0}\|\psi(j) \|^2}{\mu ^2} 
\sum_{j=t_0}^{t-1}
e(j+1)^2.
\nonumber
\end{flalign}
From the boundedness of $\psi(\cdot) $ and the square-summability of 
$e(\cdot) $, we see from the above that $(\hat\theta(t+1) - \hat\theta(t) ) $ is 
square-summable as well. As $\hat\theta(\cdot) $ is bounded, we conclude also that $(\hat\theta(t+1) - \hat\theta(t)) \rightarrow 0 $ as $t\rightarrow\infty$.

Next, we have from \eqref{predict1},
$$e(t)=\overline y(t) - \psi(t-1)^\top \hat\theta(t-1),$$
so at every frozen $t> t_0$
it is easy to see that 
\[
E(z) = 
 \widehat {\overline {\mathbf A}}_{\hat{\theta}(t-1)}(z^{-1})
 \overline Y(z) -
 \widehat {\mathbf B}_{\hat{\theta}(t-1)}(z^{-1}) 
 \overline U(z);
\]
substituting $\overline U(z) $ with its value from the control law 
in \eqref{controlTF1}, 
\[
\overline U(z)=-\frac{\mathbf P_{\hat{\theta}(t-1)}(z^{-1})}{\mathbf L_{\hat{\theta}(t-1)}(z^{-1})} \overline Y(z),
\]
to conclude by utilizing \eqref{char1} that
\begin{flalign}
\mathbf L_{\hat{\theta}(t-1)}(z^{-1}) E(z)
&=
\bigl[\widehat {\overline {\mathbf A}}_{\hat{\theta}(t-1)}(z^{-1})
\mathbf L_{\hat{\theta}(t-1)}(z^{-1})
+
\nonumber 
\\
  &\qquad
\widehat {\mathbf B}_{\hat{\theta}(t-1)}(z^{-1})
\mathbf P_{\hat{\theta}(t-1)}(z^{-1})
\bigr] \overline Y(z)
\nonumber 
\\
\Rightarrow
\mathbf L_{\hat{\theta}(t-1)}(z^{-1}) E(z)
&=
\mathbf A^* ( z^{-1} ) \overline Y(z).
\label{y_track}
\end{flalign}
We can rewrite \eqref{y_track} as
\begin{equation}
e(t) + \sum_{j=1}^n  l_j(t-1) e(t-j) = \overline y(t) + \sum_{j=1}^{2n+1}  a^*_j \overline y(t-j);
\label{y_track2}
\end{equation}
let us write the LHS above as follows:
\begin{flalign}
 e(t) + \sum_{j=1}^n  &l_j(t-1) e(t-j)  
   = 
   e(t) + 
   \nonumber 
   \\ 
   & 
   \underbrace{\sum_{j=1}^n  l_j(t-1-j) e(t-j)}_{=:T_1(t)} + 
   \nonumber 
   \\ 
   & 
   \underbrace{\sum_{j=1}^n  [l_j(t-1) - l_j(t-1-j)] e(t-j)}_{=:T_2(t)}.
   \nonumber
   \end{flalign}
Since the coefficients $l_j(\cdot),\, j=1,2,\ldots n, $ are analytic functions of $\hat\theta(\cdot) \in\overline{\cal S} $, 
the coefficients $l_j(\cdot),\, j=1,2,\ldots n,$ are bounded,
and also
\begin{equation}
(l_j(t) - l_j(t-1)) \rightarrow 0    
\label{lj_conv}
\end{equation}
as $t\rightarrow\infty $
since $(\hat\theta(t+1) - \hat\theta(t)) \rightarrow 0 $
as $t \rightarrow \infty $.
Because $e(t)\rightarrow0 $ as $t\rightarrow\infty $ and because of the boundedness of $l_j(\cdot)$, we conclude that
for each $j=1,2,\ldots n $,
$$l_j(t-1-j) e(t-j) \rightarrow 0 $$ 
as $t\rightarrow\infty $, hence $T_1(t) \rightarrow 0 $ as $t\rightarrow\infty $.
Next, 
we have  
$l_j(t-1) - l_j(t-1-j) =  [l_j(t-1) - l_j(t-2)] + [l_j(t-2) - l_j(t-3)]
+ \cdots + [l_j(t-j) - l_j(t-1-j)] $, so
utilizing \eqref{lj_conv} 
we see that
$(l_j(t-1) - l_j(t-1-j)) \rightarrow 0 $ as $t\rightarrow\infty $ 
for every $j=1,2,\ldots n $;
with the boundedness of $e(\cdot) $, 
then we conclude that
for each $j=1,2,\ldots n $,
$$[l_j(t-1) - l_j(t-1-j)] e(t-j) \rightarrow 0 $$
as $t\rightarrow\infty $, hence $T_2(t) \rightarrow 0 $ 
as $t\rightarrow\infty $ as well. 
So it is clear that
\[
\left(e(t) + \sum_{j=1}^n l_j(t-1) e(t-j)  
\right)
\rightarrow 0
\]
as $t\rightarrow\infty $.
This means that for the RHS of \eqref{y_track2}, 
$$ 
\left(
\overline y(t) + \sum_{j=1}^{2n+1}  a^*_j \overline y(t-j)
\right)
\rightarrow 0
$$
as $t\rightarrow\infty $ as well.
Finally, since 
$ \mathbf A^* ( z^{-1} )$ has all its roots inside the unit disk and
its coefficients, $a^*_j,\, j=1,2,\dots,2n+1 $, are constant and bounded, then we conclude that the tracking error $\overline y(t) \rightarrow 0 $ as $t\rightarrow \infty $, as desired.
\end{proof}

\end{document}